\definecolor{S_Blue}{RGB}{0,135,252}
\definecolor{S_Red}{RGB}{214,13,63}
\definecolor{Blue}{RGB}{47,89,151}
\definecolor{S_Grey}{RGB}{150,150,158}
\definecolor{S_Yel}{RGB}{255,204,0}
\definecolor{S_Green}{RGB}{102,204,0}
\definecolor{S_Brown}{RGB}{154,41,41}
\DeclareFontFamily{OT1}{pzc}{}
\DeclareFontShape{OT1}{pzc}{m}{it}{<-> s * [1.15] pzcmi7t}{}
\DeclareMathAlphabet{\mathpzc}{OT1}{pzc}{m}{it}
\DeclareMathOperator{\sech}{sech}
\newtheorem{lemma}{Lemma}
\newcommand{\M}{\textnormal{max}}
\newcommand{\m}{\textnormal{min}}
\newcommand{\info}{\mathcal{F}} 
\newcommand{\id}{\mathbbm{1}} 
\newcommand{\tr}[1]{\operatorname{\textnormal{Tr}}\left( {#1} \right)} 
\newcommand{\cov}{\textnormal{cov}}
\newcommand{\var}{\textnormal{var}}
\newcommand{\gM}{\|H\|_{\textnormal{s}}}
\newcommand{\gm}{g_\textnormal{min}}
\definecolor{KB}{rgb}{0.4,0.3,0.9}
\begin{document}

\title{Estimation of Hamiltonian parameters from thermal states}

\author{Luis Pedro Garc\'ia-Pintos}
\email{lpgp@lanl.gov}
\affiliation{Theoretical Division (T4), Los Alamos National Laboratory, Los Alamos, New Mexico 87545, USA}

\author{Kishor Bharti}
\affiliation{Joint Center for Quantum Information and Computer Science and Joint Quantum Institute, NIST/University of Maryland, College Park, Maryland 20742, USA}
\affiliation{A*STAR Quantum Innovation Centre (Q.InC), Institute of High Performance Computing (IHPC), Agency for Science,
Technology and Research (A*STAR), 1 Fusionopolis Way,
\# 16-16 Connexis, Singapore 138632, Republic of Singapore}

\author{Jacob Bringewatt}
\affiliation{Joint Center for Quantum Information and Computer Science and Joint Quantum Institute, NIST/University of Maryland, College Park, Maryland 20742, USA}

\author{Hossein Dehghani}
\affiliation{Joint Center for Quantum Information and Computer Science and Joint Quantum Institute, NIST/University of Maryland, College Park, Maryland 20742, USA}
\affiliation{Department of Physics, University of Maryland, College Park, Maryland 20742, USA}

\author{Adam Ehrenberg}
\affiliation{Joint Center for Quantum Information and Computer Science and Joint Quantum Institute, NIST/University of Maryland, College Park, Maryland 20742, USA}

\author{Nicole Yunger Halpern}
\affiliation{Joint Center for Quantum Information and Computer Science and Joint Quantum Institute, NIST/University of Maryland, College Park, Maryland 20742, USA}
\affiliation{
Institute for Physical Science and Technology, University of Maryland, College Park, MD 20742, USA}

\author{Alexey V. Gorshkov}
\affiliation{Joint Center for Quantum Information and Computer Science and Joint Quantum Institute, NIST/University of Maryland, College Park, Maryland 20742, USA}


\begin{abstract}

We upper- and lower-bound the optimal precision with which one can estimate an unknown Hamiltonian parameter via measurements of Gibbs thermal states with a known temperature. The bounds depend on the uncertainty in the Hamiltonian term that contains the parameter and on the term's degree of noncommutativity with the full Hamiltonian: 
higher uncertainty and commuting operators lead to better precision. We apply the bounds to show that there exist entangled thermal states such that the parameter can be estimated with an error that decreases faster than $1/\sqrt{n}$, beating the standard quantum limit. This result governs Hamiltonians where an unknown scalar parameter (e.g. a component of a magnetic field) is coupled locally and identically to $n$ qubit sensors.
In the high-temperature regime, our bounds allow for pinpointing the optimal estimation error, up to a constant prefactor.
Our bounds generalize to joint estimations of multiple parameters. In this setting, we recover the high-temperature sample scaling derived previously via techniques based on quantum state discrimination and coding theory.
In an application, we show that noncommuting conserved quantities hinder the estimation of chemical potentials. 

\end{abstract}

\maketitle

Substantial work has been devoted to determining the precision with which a Hamiltonian parameter can be estimated from measurements on a time-evolving system. For instance, consider a spin network immersed in a magnetic field $\mu$. The network's state acquires information about the field's magnitude. Measuring copies of the state can reveal $\mu$. The quantum Cram\'er-Rao bound sets an asymptotically saturable lower bound on the precision with which the parameter can be estimated~\cite{helstrom1969quantum,
 BraunsteinCaves1994}.

Here, we focus on the less explored problem of estimating parameters from systems in a thermal state 
\begin{align}
\label{eq:Gibbs}
\rho = \frac{1}{Z_\beta} e^{-\beta H} = \frac{1}{Z_\beta} \sum_j e^{-\beta \omega_j} \ket{j}\!\bra{j}
\end{align}
at a known inverse temperature $\beta$.
 $\omega_j$ and $\ket{j}$ are the Hamiltonian's eigenvalues and eigenvectors, $H \nobreak=\nobreak \sum_j \omega_j \ket{j}\!\bra{j}$, and $Z_\beta \coloneqq \tr{e^{-\beta H}}$ is the partition function. Parameters of $H$ could be unknown. The system could thermalize to $\rho$ through interactions with a thermal environment or through a state-preparation algorithm~\cite{Yigit,kastoryano2023quantum}. Probing the environment could yield information about $\beta$~\cite{CorreaThermoPRL2015}. 

The thermal state encodes information about the Hamiltonian parameters. We consider $M$-term Hamiltonians:
\begin{align}
\label{eq:Hamiltonian}
H = \sum_{l=1}^M H_l = \sum_{l=1}^M \mu_l A_l.
\end{align}
The $A_l$ are Hermitian operators, and the $\mu_l$ are real coefficients. 
The $\mu_l$ could represent local or global fields or coupling constants (Fig.~\ref{fig:fig1}).
We bound the precision with which the $\mu_l$ can be estimated from measurements of copies of 
$\rho$. 
To achieve this goal, we will use the multiparameter quantum Cram\'er-Rao bound, which constrains the estimation of a set of parameters~\cite{liu2019quantum}.

The quantum Cram\'er-Rao bound
relates the minimum estimation error to the quantum Fisher information~\cite{paris2009quantum}. The quantum Cram\'er-Rao bound has been applied, for example, to the field of thermometry
~\cite{Paris_2015,
CorreaThermoPRL2015,
CorreaThermoPRB2018,
miller_energytemperature,
Potts2019fundamentallimits,
mok2021optimal,PhysRevLett.128.130502}. The bound implies the minimum uncertainty with which a temperature $T$ can be estimated from $\mathcal{N}$ measurements: $\var(\hat{T}_\textnormal{opt}) \nobreak = \nobreak T^4 \tfrac{1}{\mathcal{N} (\Delta H)^2}$, where $(\Delta H)^2 \nobreak \coloneqq \nobreak \langle H^2 \rangle \nobreak-\nobreak \langle H \rangle^2$ is the Hamiltonian's variance in the thermal state~\cite{liu2019quantum}. Whenever $x$ denotes a parameter to be estimated, we mean by $\hat{x}$ an estimator. Higher energy variances allow for better parameter estimation.
This result echoes the relative error $\var(\hat{\mu}_\textnormal{opt}) /\mu^2\nobreak = \nobreak \tfrac{1}{4 \mathcal{N} t^2 (\Delta H)^2}$ with which a global parameter $\mu$ can be estimated from measurements of copies of a pure state evolving under the Hamiltonian $H = \mu A$ for a time $t$.
In related work, Refs.~\cite{ZanardiPRA2007,ZanardiPRA2017b} geometrically characterize the Fisher metric to study the role of phase transitions in thermometry. 
This Letter focuses on the error in estimates of an \emph{arbitrary} Hamiltonian parameter, rather than the error in temperature estimation.

Several studies have concerned the
reconstruction of a Hamiltonian from its eigenstates~\cite{garrison2018does,
 greiter2018method,
 chertkov2018computational,
 zhu2019reconstructing,
 qi2019determining,
 turkeshi2019entanglement,
 bairey2019learning}, from steady states~\cite{zhou2021can}, or from Gibbs states~\cite{SensingHamPRA2015,bairey2019learning}. 
Recent results under the umbrella of the ``Hamiltonian-learning problem'' provide algorithms for estimating Hamiltonian parameters while minimizing (i) the number of copies of the thermal state $\rho$ needed (the sample complexity) and (ii) the algorithm's runtime (the time complexity)~\cite{
anshu2021sample,
haah2022optimal,
gu2022practical,
sbahi2022provably}.
Such complexity-theoretic approaches focus on (a) the asymptotic sample and time complexities' dependence on $\beta$ and (b) the number of unknown parameters. In contrast, we leverage the quantum Cram\'er-Rao bound to identify how 
the uncertainties in the $A_l$s, and the $A_l$s' noncommutativity with the thermal state, influence the minimum precision with which the $\mu_l$ can be estimated. 
Upon pinpointing the uncertainties' influence on precision, we can construct a many-body model that beats the standard quantum limit.

This Letter is organized as follows. First, we review the quantum Fisher information, a powerful tool for analyzing parameter estimation. We bound the quantum Fisher information obtainable about one Hamiltonian parameter, then bound the precision with which the parameter can be estimated. These bounds enable us to identify a many-body model in which the achievable precision beats the standard quantum limit. Extending beyond one Hamiltonian parameter, we then bound the precision with which multiple parameters can be estimated simultaneously. Finally, we discover that noncommutation of conserved quantities (\emph{charges}) hinders the estimation of chemical potentials. Noncommuting charges are particularly quantum (due to the importance of noncommutation in quantum measurement disturbance, Heisenberg uncertainty, etc.) and have been of recent thermodynamic interest~\cite{Majidy_23_Noncommuting}.

\begin{figure}
  \centering  
   \includegraphics[trim=00 00 00 00,width=0.45 \textwidth]{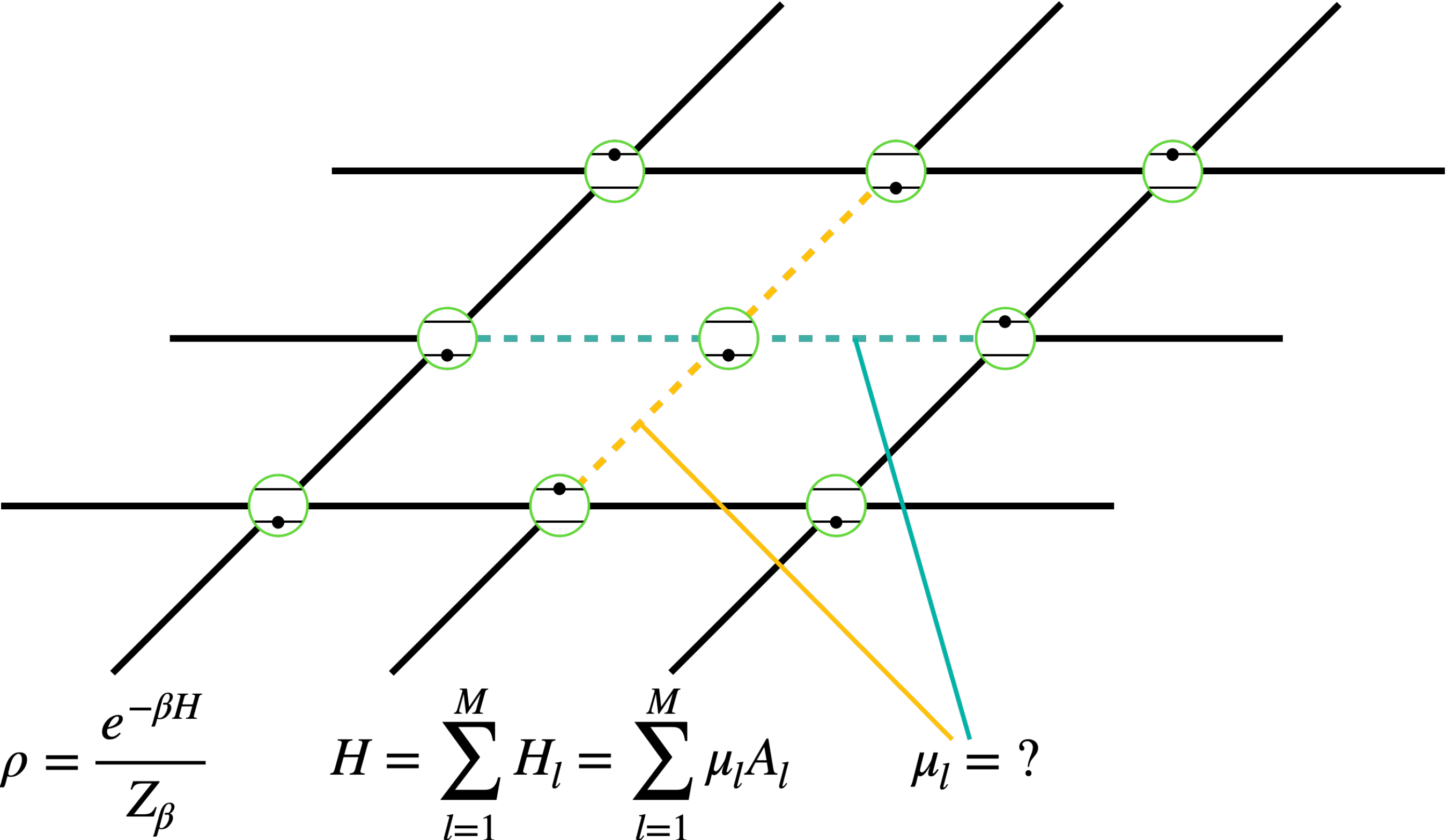}
\caption{\textbf{Estimating Hamiltonian parameters from thermal states.} 
How accurately can one determine $\mu_l$, which can be a coupling constant---pictured here as yellow and teal dashed lines for a system of qubits on a lattice---or a field, in a Hamiltonian $H$ from measurements performed on $\mathcal{N}$ copies of a thermal state $\rho = e^{-\beta H}/Z_\beta$? 
We use the quantum Cram\'er-Rao bound to derive saturable upper and lower bounds on the optimal precision with which such Hamiltonian parameters can be estimated.
\label{fig:fig1}
}
\end{figure}

\vspace{7pt}
\textit{The quantum Fisher information matrix.---}The multiparameter quantum Cram\'er-Rao bound constrains the statistics of any estimator $\hat{\vec\mu}$ of the parameters $\mu_l$~\cite{liu2019quantum}:
\begin{align}
    \label{eq:multiCR}
    \cov(\hat{\vec{\mu}}) \geq \frac{1}{\mathcal{N}}\info^{-1}.
\end{align}
$\mathcal{N}$ denotes the number of experimental repetitions.  
$\info$ denotes the quantum Fisher information matrix, with components
\begin{align}
\label{eq:FisherMatrix}
    \info_{lm} \coloneqq 2 \sum_{jk} \frac{ \textnormal{Re} \left[ \bra{j} \partial_l \rho \ket{k}\!  \bra{k} \partial_m \rho \ket{j} \right] }{p_j + p_k}.
\end{align}
The state eigendecomposes as $\rho = \sum_j p_j \ket{j}\!\bra{j}$.
Thus, the quantum Fisher information matrix characterizes the precision with which parameters $\mu_l$ can be estimated jointly.  
The multiparameter quantum Cram\'er-Rao bound is saturated when the optimal measurements for estimating the $\mu_l$ are compatible. Mathematically, this condition is met if and only if 
$    \tr{\rho [L_l,L_m]}
\nobreak = \nobreak 0$.
The symmetric logarithmic derivative $L_l$ is implicitly defined by $\partial_l \rho \eqqcolon \tfrac{1}{2}\{ \rho,L_l \}$~\cite{liu2019quantum}.
Throughout this work, we denote partial derivatives by $\partial_l \coloneqq \tfrac{\partial}{\partial \mu_l}$.

The diagonal matrix element $\info_{ll}$ quantifies the minimum precision with which one unknown $\mu_l$ can be estimated if all other parameters are known.
The single-parameter quantum Cram\'er-Rao bound says that every estimator $\hat{\mu_l}$ has a variance 
\begin{align}
\label{eq:CRsingle}
\var(\hat{\mu_l}) \geq \frac{1}{\mathcal{N} \info_{ll}}.
\end{align}
Optimized measurements saturate this bound~\cite{BraunsteinCaves1994,
paris2009quantum}. Equations~\eqref{eq:multiCR} and~\eqref{eq:CRsingle} thus pinpoint the quantum Fisher information as a powerful tool that determines ultimate limits on quantum metrology. The stronger $\rho$'s dependency on $\mu_l$, the higher the quantum Fisher information $\info_{ll}$ [Eq.~\eqref{eq:FisherMatrix}], and so the greater the precision.

\vspace{7pt}
\textit{Bounds on the quantum Fisher information.---}Exactly evaluating the quantum Fisher information can be difficult. Therefore, it is desirable to bound $\info_{ll}$ in terms of more-easily-calculable quantities. We derive two sets of upper and lower bounds on the quantum Fisher information of the $\mu_l$ in Eq.~\eqref{eq:Hamiltonian}:
\begin{subequations}
 \label{eq:BoundsDiagonalDeltaH}
 \begin{align}
\label{eq:UpperBoundsDiagonalDeltaH}
  \info_{ll}  &\leq \beta^2 \, (\Delta A_l)^2  \\ 
\label{eq:LowerBoundsDiagonalDeltaH}
  \info_{ll} &\geq 4  \beta^2 c_1  \, (\Delta A_l)^2 ,  
\end{align}   
\end{subequations}
and   
\begin{subequations}
\label{eq:BoundsDiagonalWY}
\begin{align}
\label{eq:BoundsDiagonalWYupper}
\info_{ll} &\leq 2.4 \, c_2 \, \beta^2 \, \left( (\Delta A_l)^2 \nobreak-\nobreak \tfrac{1}{2}\big\|  [\sqrt{\rho},A_l] \big\|_2^2 \right)
\quad \text{and} \\
\label{eq:BoundsDiagonalWYlower}
\info_{ll} &\geq 0.8 \, \beta^2 \left( (\Delta A_l)^2 \nobreak-\nobreak \tfrac{1}{2}\big\|  [\sqrt{\rho},A_l] \big\|_2^2 \right).
\end{align}
\end{subequations}
$\Delta A_l \nobreak = \nobreak \sqrt{\langle A_l^2 \rangle - \langle A_l \rangle^2}$ is the uncertainty of operator $A_l$ in $\rho$; $\| A \|_2^2 \coloneqq \tr{A A^\dag}$; and we have defined 
\begin{subequations}
\label{eq:cConst}
    \begin{align}
c_1 &\coloneqq  \tanh^2(\beta \gM/2)/(\beta \gM)^2
\quad \text{and}
\\
c_2 &\coloneqq 2 c_1 \, \cosh(\beta \gM/2) . 
\end{align} 
\end{subequations}
The $\gM \nobreak \coloneqq \nobreak  \max_j \omega_j - \min_j \omega_j$ is the Hamiltonian seminorm defined by the maximum energy gap.
We derive the bounds by computing the thermal state's quantum Fisher information, then algebraically manipulating the expression (Appendix~\ref{sec-app:BoundsOnFisher}).

Equation~\eqref{eq:BoundsDiagonalDeltaH} constrains the quantum Fisher information about $\mu_l$ in terms of $\Delta A_l$, resembling expressions for the quantum Fisher information about $\beta$ in thermometry~\cite{liu2019quantum}. Equation~\eqref{eq:BoundsDiagonalWY} constrains the quantum Fisher information about $\mu_l$ also in terms of the Wigner-Yanase skew information $\tfrac{1}{2}\big\| [\sqrt{\rho},A_l] \big\|_2^2$. The skew information was proposed as a means to discriminate quantum and classical contributions to 
uncertainty~\cite{wigner1963information,luo2005quantum}. It has found applications in parameter estimation~\cite{LuoPRL2003, miller_energytemperature,sidhu2020geometric}, as an asymmetry measure~\cite{marvian2014extending}, and as a coherence measure~\cite{PhysRevLett.113.170401,PhysRevA.91.042330}. The difference $(\Delta A_l)^2 \nobreak-\nobreak \tfrac{1}{2}\big\|  [\sqrt{\rho},A_l] \big\|_2^2$ signifies the classical uncertainty about $A_l$~\cite{luo2005quantum}. 
This classical uncertainty vanishes for pure states.

When the temperature is high relative to the maximum energy gap ($\beta \gM \ll 1$), $c_1 \approx c_2/2 \approx 1/4 $. 
The upper and lower bounds in Eq.~\eqref{eq:BoundsDiagonalDeltaH} coincide, while the upper and lower bounds in Eq.~\eqref{eq:BoundsDiagonalWY} differ by a prefactor of $1.2$. That is, our bounds are saturated, up to a constant prefactor, at high temperatures. Our bounds pinpoint $\info_{ll}$ by tightly sandwiching it.

The upper bound~\eqref{eq:BoundsDiagonalWYupper} is also saturable, up to a constant prefactor, at low temperatures. To show this, we denote by $\mu$ the magnitude of a field $\mu \sigma_z$ acting on a qubit with a Hamiltonian
$H \nobreak=\nobreak \Omega_x \sigma_x \nobreak+\nobreak \Omega_z \sigma_z \nobreak+\nobreak \mu \sigma_z$. The $\sigma_\alpha$s are Pauli matrices.
The quantum Fisher information and its upper bound~\eqref{eq:BoundsDiagonalWYupper} can be calculated exactly. 
At
low temperatures ($\beta \|H\|_s \gg 1$), 
$\info_\mu \nobreak\approx\nobreak 16 \Omega_x^2/\|H\|_s^4 \nobreak \leq \nobreak 2.4\, c_2 \beta^2 \left( (\Delta A_l)^2 \nobreak-\nobreak \tfrac{1}{2}\big\|  [\sqrt{\rho},A_l] \big\|_2^2 \right)  \approx  19.2 \Omega_x^2/\|H\|_s^4$; and,
at
high temperatures ($\beta \|H\|_s\ll 1$), $\info_\mu \nobreak\approx\nobreak \beta^2 \nobreak \leq \nobreak 2.4 c_2 \beta^2 \left( (\Delta A_l)^2 \nobreak-\nobreak \tfrac{1}{2}\big\|  [\sqrt{\rho},A_l] \big\|_2^2 \right) \nobreak\approx\nobreak 1.2 \beta^2$
(Appendix~\ref{sec-app:QubitExample}).
The contribution of the Wigner-Yanase skew information is necessary for obtaining a saturable bound at low temperatures.

Reference~\cite{miller_energytemperature} contains the closest previous result:  $\info_{ll} \nobreak \leq \nobreak \beta^2 \int_0^1 \tr{\rho^a \delta A_l \rho^{1-a} \delta A_l} da$, with $\delta A_l \coloneqq A_l \nobreak-\nobreak \langle A_l \rangle$.
Yet our upper bounds~\eqref{eq:UpperBoundsDiagonalDeltaH},~\eqref{eq:BoundsDiagonalWYupper}, and the bound in Ref.~\cite{miller_energytemperature} are different: no bound is tighter than another in all regimes.  To our knowledge, Eqs.~\eqref{eq:LowerBoundsDiagonalDeltaH} and~\eqref{eq:BoundsDiagonalWYlower} are the first lower bounds on thermal states' quantum Fisher information. We compare the bounds in a spin-chain example in Appendix~\ref{sec-app:Comparisons}.

\vspace{7pt}
\textit{Bounds on single-parameter estimation errors.---}Consider estimating an unknown parameter $\mu_l$.
We denote the  
\emph{optimal} error by $\sqrt{\var_{\textnormal{opt}}(\hat{\mu_l})}$. The single-parameter quantum Cram\'er-Rao bound~\eqref{eq:CRsingle} is saturable by suitably chosen estimators~\cite{paris2009quantum}. Therefore, Eqs.~\eqref{eq:BoundsDiagonalDeltaH} and~\eqref{eq:BoundsDiagonalWY} engender two sets of upper and lower bounds on $\sqrt{\var_{\textnormal{opt}}(\hat{\mu_l})}$. The relative error $\frac{ \sqrt{\var_{\textnormal{opt}}(\hat{\mu_l})} }{|\mu_l|}$ 
achievable with $\mathcal{N}$ copies of a thermal state is 
\begin{align}
\label{eq:BoundSingleVariance}
\frac{1}{ \beta \sqrt{\mathcal{N}}
 \, \Delta H_l} 
 \leq \frac{ \sqrt{\var_{\textnormal{opt}}(\hat{\mu_l})} }{|\mu_l|} 
 \leq \frac{1}{2 \beta c_1^{1/2} \sqrt{\mathcal{N}} \, \Delta H_l},
\end{align}
             and
\begin{align}
\label{eq:singleparamBound}
&\frac{1}{\sqrt{2.4c_2} \beta \, \sqrt{\mathcal{N}} \bigg( \left( \Delta H_l \right)^2 -   \tfrac{1}{2} \Big\| \big[\sqrt{\rho}, H_l \big] \Big\|_2^2 \bigg)^{1/2} } \nonumber \\
&\qquad \qquad \leq \frac{ \sqrt{\textnormal{var}_{\textnormal{opt}} ( \hat\mu_l ) }}{|\mu_l|} \leq \\ 
&\frac{1}{\sqrt{0.8} \beta \, \sqrt{\mathcal{N}} \bigg( \left( \Delta H_l \right)^2 -   \tfrac{1}{2} \Big\| \big[\sqrt{\rho}, H_l \big] \Big\|_2^2 \bigg)^{1/2} }. \nonumber 
\end{align}

By Eq.~\eqref{eq:BoundSingleVariance}, a higher uncertainty $\Delta H_l$ in $H_l = \mu_l A_l$ can enable better precision. Meanwhile, Eq.~\eqref{eq:singleparamBound} constrains the relative error in terms of the classical uncertainty in $H_l = \mu_l A_l$. Equation~\eqref{eq:singleparamBound} also reveals the role of noncommutativity: when $A_l$ does not commute with $\rho$, the ability to estimate $\mu_l$ diminishes. This fact has an analogue in single-parameter estimation in unitary quantum metrology, as detailed in Appendix~\ref{app:nc-and-Hamiltonian-evolution}. There, $\mu_l$ can be encoded in a probe state via Hamiltonian evolution under $H=\mu_lA_l+H'$, for an arbitrary Hermitian $H'$. If $[A_l,H]\neq 0$---and so $[\rho, H] \neq 0$ for thermal states $\rho$---the ability to measure $\mu_l$ is diminished~\footnote{Of course, in unitary quantum metrology, whether $A_l$ commutes with $\sqrt{\rho}$ does not matter: no thermal state will evolve under $H$ and gain information about $\mu_l$.}.

 In quantum metrology, the estimation error's scaling with a sensor's size can constitute an entanglement advantage. Consider a system of $n$ subsystems and $H_l$ a sum of $n$ local terms. Superextensive variances $\big(\Delta H_l \big)^2 \sim n^{\alpha}$, with $\alpha > 1$, are atypical for thermal states
of spatially-local Hamiltonians. 
For instance, $\big(\Delta H_l \big)^2 \sim n$ for states with exponentially decaying correlations~\cite{gong2022bounds,
alhambra2022quantum}. 
From Eq.~\eqref{eq:BoundSingleVariance}, one would expect the optimal estimation error to scale as $1/(\beta \sqrt{\mathcal{N}} \sqrt{n})$, as in the standard quantum limit~\cite{SQL2020,SQL2023}. At critical points, however, $\big(\Delta H_l \big)^2 \sim n$ may be violated~\cite{ZanardiPRA2017b,
FrerotPRL2018,
Gabbrielli_2018_PhaseTransitions}.
We can observe violations also with certain nonlocal Hamiltonians.

We now show that one can beat the standard quantum limit in Hamiltonian metrology using thermal states. Consider estimating a field $\mu$ by measuring copies of a thermal state of the $n$-qubit Hamiltonian $H \nobreak = \nobreak  \mu \sum_{j=1}^n \left( \sigma_z^j+ 1 \right) \nobreak -\nobreak \lambda    \bigotimes_{j=1}^n n \sigma_x^j \nobreak\equiv\nobreak H_\mu \nobreak+\nobreak H_\lambda$. We assume $\lambda > 0$ and $\mu > 0$. 
Let $\ket{\overline{0}}$ denote the $n$-fold tensor product of the eigenvalue-$(-1)$ eigenstate of $\sigma_z$; and $\ket{\overline{1}}$, the product of the eigenvalue-$1$ eigenstate. The $n$-qubit GHZ state $\ket{\Phi} \nobreak \coloneqq \nobreak  \left(  \ket{ \overline{0 } } \nobreak + \nobreak  \ket{ \overline{1 } } \right)/\sqrt{2}$ is a ground state of $H_\lambda$.
We prove in  Appendix~\ref{sec-app:SQL} that  $\ket{\Phi}$ is the unique ground state if $H_\mu$ is a perturbation ($\mu/\lambda \ll 1$). The variance of $H_\mu$ in $\ket{\Phi}$ is $\langle \Phi \vert H_\mu^2 \ket{\Phi} \nobreak - \langle \Phi \vert H_\mu\ket{\Phi}^2\nobreak= \nobreak \mu^2 n^2$. Therefore, one might expect that $ \Delta H_\mu \sim \mu n^\alpha$, with $\alpha > 1/2$, in low-temperature thermal states. In Appendix~\ref{sec-app:SQL}, we prove this expectation, showing that $\alpha=1$ for  $\beta \lambda n \gg 1$. Note this proof does not require that $\mu/\lambda \ll 1$. 
By Eqs.~\eqref{eq:BoundSingleVariance} and~\eqref{eq:singleparamBound}, this result suggests a minimum relative estimation error that decreases faster than the standard quantum limit $1/\sqrt{n}$.  Figure~\ref{fig:fig2} supports this argument, exhibiting a regime with optimal relative estimation errors below~$1/\sqrt{n}$. These results would have been difficult to deduce from the expression~\eqref{eq:FisherMatrix} for the quantum Fisher information. By leveraging our bounds, we found a model that beats the standard quantum limit.

\begin{figure}
Relative estimation error vs. number of qubits \\
  \centering  
  \includegraphics[trim=00 00 00 00,width=0.5\textwidth]{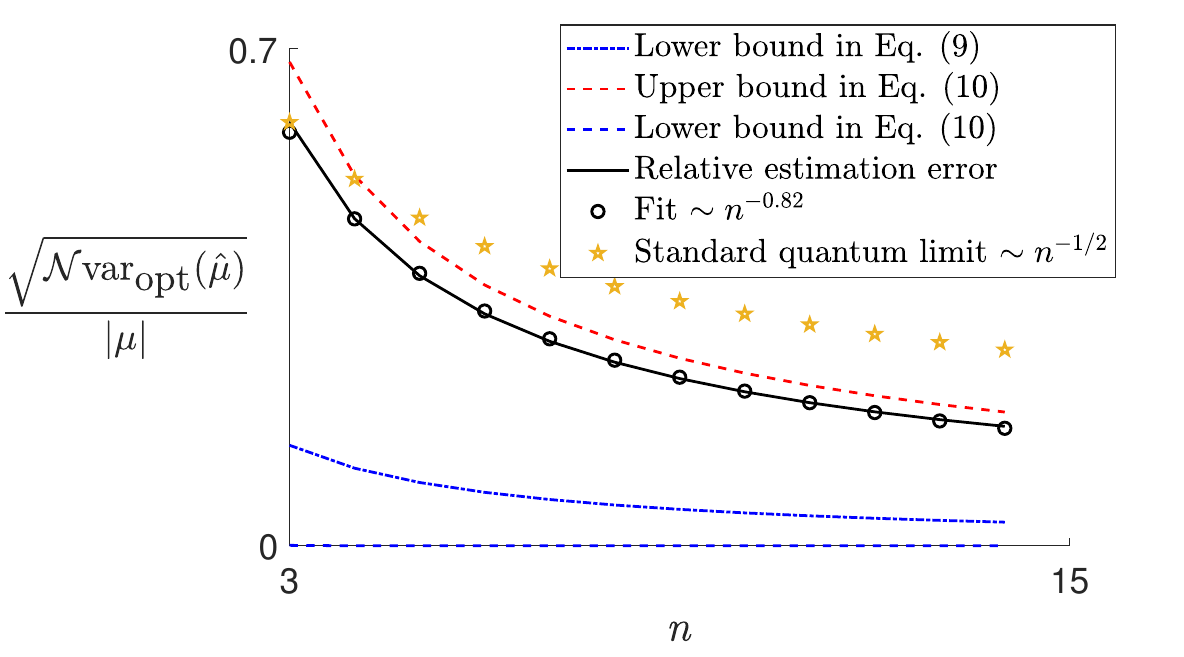} 
\caption{ 
\textbf{Beating the standard quantum limit.} 
The figure shows the relative estimation error $\sqrt{\var_{\textnormal{opt}}(\hat{\mu})}/|\mu|$ for the parameter $\mu$ in the  $n$-qubit 
$H \nobreak = \nobreak  \mu \sum_j^n \left( \sigma_z^j + 1 \right) \nobreak -\nobreak  \lambda  \bigotimes_j^n  n \sigma_x^j \nobreak\coloneqq\nobreak H_\mu \nobreak+\nobreak H_\lambda$. The bounds appear in Eqs.~\eqref{eq:BoundSingleVariance} and~\eqref{eq:singleparamBound} [the upper bound in~\eqref{eq:BoundSingleVariance} is, here, too loose to appear in the plotted range]. We take $\lambda \beta  = 2 \mu \beta = 6$. As we show in  Appendix~\ref{sec-app:SQL}, $ \Delta H_\mu \sim \mu n$, for large $\beta \lambda n$. A consequence, suggested by Eqs.~\eqref{eq:BoundSingleVariance} and~\eqref{eq:singleparamBound}, is an optimal estimation error that decays faster than $1/\sqrt{n}$.
\label{fig:fig2}
}
\end{figure}

\vspace{7pt}
\textit{Bounds on multiparameter estimation errors.---}The single-parameter bounds above apply when all parameters except the target parameter are known. However, our results imply bounds on the error in joint estimates of $M$ Hamiltonian parameters. 
The variances' sum serves as the error measure. We aim for a total error $\sum_{l=1}^M \textnormal{var}(\hat\mu_l) \, = \,  \epsilon_{\textnormal{err}}^2$.  
By the multiparameter Cram\'er-Rao bound~\eqref{eq:multiCR},
\begin{align}
\label{eq:boundMulti}
\epsilon_{\textnormal{err}}^2 = \sum_{l=1}^M \textnormal{var}(\hat\mu_l) \geq  \frac{1}{\mathcal{N}} \tr{ \info^{-1} } \geq  \frac{1}{\mathcal{N}} \sum_{l=1}^M \frac{1}{\info_{ll}} . 
\end{align}
The final inequality holds under the condition $\info > 0$, satisfied if one can estimate every linear combination of parameters~\cite{liu2019quantum}.
The second inequality is useful for large $M$, when calculating  $\info^{-1}$ is computationally hard.

The second inequality is saturated if and only if $\info$ is diagonal. The first inequality is saturated if and only if
\begin{align}
\label{eq:saturability}
   0 &= \tr{\rho [L_l,L_m]} \\
   &=4 \sum_{\omega_j \neq \omega_k} \frac{(p_j - p_k)^3}{(\omega_j - \omega_k)^2 (p_j+p_k)^2} \bra{j} A_l \ket{k} \bra{k} A_m \ket{j} \nonumber 
\end{align}
for all $\{l,m\}$, where $p_j = e^{-\beta \omega_j}/Z_\beta$. These are rather stringent conditions violated by typical many-body Hamiltonians. 

By combining Eq.~\eqref{eq:boundMulti} with the bounds~\eqref{eq:UpperBoundsDiagonalDeltaH} and~\eqref{eq:BoundsDiagonalWYupper}, we bound the error in the estimation of multiple Hamiltonian parameters. 
To learn $M$ Hamiltonian parameters with an error 
$\epsilon_{\textnormal{err}}$, one needs a number
$\mathcal{N}$ of measurements satisfying
\begin{subequations}
\label{eq:MultiBound}
\begin{align}
\mathcal{N} &\geq \frac{ 1 }{ \beta^2 \, \epsilon_{\textnormal{err}}^2} \, \sum_{l=1}^M \frac{ c_2^{-1}/2 }{  \left( \Delta A_l \right)^2 - \tfrac{1}{2}  \big\| \left[\sqrt{\rho}, A_l \right] \big\|_2^2 }
\quad \text{and} \\
\mathcal{N} 
&\geq  \frac{ 1 }{ \beta^2 \, \epsilon_{\textnormal{err}}^2} \, \sum_{l=1}^M \frac{1}{ \left( \Delta A_l \right)^2 }.
\end{align}
\end{subequations}
Consequently, 
\begin{align}
\label{eq:BoundMultiple}
\mathcal{N} = \Omega \left( \frac{ M }{ \beta^2 \, \epsilon_{\textnormal{err}}^2} \min_l \frac{1}{\big( \Delta A_l \big)^2}\right).
\end{align}

We can compare Eq.~\eqref{eq:BoundMultiple} to complexity-theoretic results~\cite{anshu2021sample,haah2022optimal} about the number $\mathcal{N}$ of copies of the state required to learn $M$ Hamiltonian parameters to within an \emph{l}$_2$-distance error $\epsilon$ defined through $\epsilon^2 \nobreak = \nobreak \sum_{l=1}^M (\hat\mu_l - \mu_l )^2$. 
At least
$
\mathcal{N} = \Omega \left( \frac{\exp(\beta) M}{ \beta^2 \, \epsilon^2} \right)
$
samples are required for an $M$-qubit Hamiltonian~\cite{haah2022optimal}.
At low temperatures, their bound is tighter, as a function of $\beta$. 
Moreover, we have only proven Eq.~\eqref{eq:MultiBound} to be saturable under stringent conditions on the operators $A_l$. They prove a stronger result: $\mathcal{N} = \mathcal{O}\left( \frac{M}{ \beta^2 \, \epsilon^2} \ln\left( M/\delta \right)\right)$ samples suffice to learn the parameters with a constant failure probability $\delta$.
In contrast, our results are more general since they concern the average error in estimations of parameters in arbitrary Hamiltonians. Also, our results reveal the roles of uncertainties $\Delta A_l$, and of the state's noncommutativity with $A_l$, in the estimation error. 
We compare this Letter's bounds with previous bounds in detail in Appendix~\ref{sec-app:HamiltonianLearning} (see Table~\ref{table:complexity_Ham_learning}).

\vspace{7pt}
\textit{Estimation of chemical potentials.---}In the presence of conserved charges $Q_l$, thermalizing systems reach generalized Gibbs states~\cite{GGERigol2016,
GGENicole2016,
GGEGuryanova2016,
GGEmonnai2020}
\begin{align}
\rho_{\beta,\{\mu_l\}} 
= e^{-\beta \left( H_0 + \sum_l \mu_l Q_l\right)} / Z_{\beta,\{\mu_l\}}.
\end{align}
 $H_0$ is the system Hamiltonian. The $\mu_l$ are the \emph{chemical potentials} corresponding to the charges, which satisfy $[H_0,Q_l] = 0$ for all $l$. 

Our results imply constraints on the minimum error in estimations of the chemical potentials: we identify $H \equiv H_0 + \sum_l \mu_l Q_l$ and $A_l \equiv Q_l$ in Eq.~\eqref{eq:Gibbs}. For example, consider estimating one $\mu_l$. Equations~\eqref{eq:BoundsDiagonalWY}, with the quantum Cram\'er-Rao bound's saturability, imply
\begin{align}
\label{eq:singleparamBoundChem}
&\frac{1}{2.4 \, c_2 \, \beta^2  \mathcal{N} \, \bigg( \left( \Delta Q_l \right)^2 -   \tfrac{1}{2} \Big\| \big[\sqrt{\rho}, Q_l \big] \Big\|_2^2 \bigg) }  \nonumber \\
&\qquad \qquad \leq   \textnormal{var}_{\textnormal{opt}} ( \hat\mu_l )   \leq  \\ 
&\frac{1}{0.8 \beta^2 \mathcal{N} \, \bigg(  \left( \Delta Q_l \right)^2 - \tfrac{1}{2}  \Big\| \big[\sqrt{\rho}, Q_l \big] \Big\|_2^2 \bigg) }. \nonumber
\end{align}

Classically, all charges commute with each other and so with $\rho$. 
Quantum charges can defy this expectation: $[Q_l,Q_m] \neq 0$~\cite{Lostaglio_2017,
GGEGuryanova2016,
GGENicole2016,
GGELandiPRXQ2022,GGENicole2023}. 
For instance, the two-qubit Hamiltonian $H_0 \nobreak=\nobreak \sigma_z\otimes\sigma_z$ conserves charges $Q_1 \nobreak=\nobreak \sigma_z \otimes \id$ and $Q_2 \nobreak=\nobreak \sigma_x\otimes\sigma_x$ that do not commute with each other. This noncommutation prevents charges from commuting with the state: $[Q_l,\sqrt{\rho}] \neq 0$. 
This lack of equality implies a quantum disadvantage in parameter estimation: charges' noncommutativity hinders the ability to measure chemical potential $\mu_1$.

\vspace{7pt}
\textit{Discussion.---} 
Our bounds highlight how estimation error depends on the noncommutativity of the operators defining the Hamiltonian. The noncommutativity engenders a disadvantage, diminishing precision. See Eq.~\eqref{eq:singleparamBound} and Appendix~\ref{app:nc-and-Hamiltonian-evolution} for a comparison with the estimation of parameters from Hamiltonian evolution. 

Furthermore, we found that noncommutativity of conserved charges hinders estimations of chemical potentials.
This result contrasts with Refs.~\cite{GGELandiPRXQ2022,Non-AbelianAdvantages2023}, which show that conserved quantities' noncommutativity provides an advantage in quantum transport processes by decreasing entropy production. Our work therefore contributes to the debate about whether noncommuting charges enhance or hinder desirable properties in information-processing and thermodynamic tasks~\cite{Majidy_23_Non,Majidy_23_Noncommuting}.

A natural open problem, unexplored in this work, is the construction of concrete protocols that saturate the bounds~\cite{len2022quantum,yin2023heisenberglimited}. 
 Moreover, we found a toy model where, using measurements on a thermal state, one can beat the standard quantum limit for the task of estimating (a component of) a field coupled locally to $n$ qubits. Further work could shed light on whether one can use thermal states also of more-physically-realistic, fully local Hamiltonians to beat the standard quantum limit, possibly by exploiting criticality~\cite{ZanardiPRA2017b,
FrerotPRL2018,
Gabbrielli_2018_PhaseTransitions,yu2023criticality}.

\vspace{7pt}
\textit{Note added.---}Ref.~\cite{bakshi2023learning}, which studies the Hamiltonian learning problem at all temperatures, was posted during the preparation of this manuscript.

\textit{Acknowledgments.---}L.P.G.P.~thanks \'Alvaro Alhambra for discussions and Chris White for comments on the draft. The work at Los Alamos National Laboratory (LANL) was carried out under the auspices of the US DOE and NNSA under contract No.~DEAC52-06NA25396. 
L.P.G.P. also acknowledges support by the DOE Office of Science, Office of Advanced Scientific Computing Research, Accelerated Research for Quantum Computing program, Fundamental Algorithmic Research for Quantum Computing (FAR-QC) project, Laboratory Directed Research and Development (LDRD) program of LANL under project number 20230049DR, and Beyond Moore’s Law project of the Advanced Simulation and Computing Program at LANL managed by Triad National Security, LLC, for the National Nuclear Security Administration of the U.S. DOE under contract 89233218CNA000001.   J.B., A.E., and A.V.G~were supported in part by the DoE ASCR Accelerated Research in Quantum Computing program (award No.~DE-SC0020312), NSF QLCI (award No.~OMA-2120757), AFOSR, DoE ASCR Quantum Testbed Pathfinder program (awards No.~DE-SC0019040 and No.~DE-SC0024220), NSF PFCQC program, ARO MURI, AFOSR MURI, and DARPA SAVaNT ADVENT. Support is also acknowledged from the U.S.~Department of Energy, Office of Science, National Quantum Information Science Research Centers, Quantum Systems Accelerator.   K.B.~thanks Dax Koh, Naresh Goud Boddu and Rahul Jain for interesting discussions. H.D.~acknowledges the Simons Foundation ``Simons Investigator'' program for Mathematical and Physical Sciences, the U.S.~Department of Energy Quantum Systems Accelerator (QSA) Research Center (DE-FOA-0002253), and the National Science Foundation QLCI grant for Robust Quantum Systems (OMA-2120757). 
This work received support from the National Science Foundation (QLCI grant OMA-2120757), and the John Templeton Foundation (award no.~62422). Specific
product citations are for the purpose of clarification only, and
are not an endorsement by the authors or by NIST.

\bibliography{references}

\clearpage
\newpage
\onecolumngrid

\section*{Appendix}
\setcounter{secnumdepth}{1}
\setcounter{equation}{0}
\renewcommand{\theequation}{A\arabic{equation}}

Appendix~\ref{sec-app:FisherInfo} --- Derivation of the quantum Fisher information matrix of Hamiltonian parameters for thermal states.\vspace{7pt}

 Appendix~\ref{sec-app:BoundsOnFisher} --- Upper and lower bounds on the diagonals of the quantum Fisher information matrix: \\
proof of Eqs.~\eqref{eq:BoundsDiagonalDeltaH} 
and~\eqref{eq:BoundsDiagonalWY} in the main text.
\vspace{7pt}

Appendix~\ref{app:nc-and-Hamiltonian-evolution} --- Role of noncommutativity in parameter estimation.
\vspace{7pt}

 Appendix~\ref{sec-app:QubitExample} --- Quantum Fisher information of a two-level system.
\vspace{7pt}

Appendix~\ref{sec-app:Comparisons} --- Comparisons of the bounds on the quantum Fisher information.
\vspace{7pt}

Appendix~\ref{sec-app:SQL} --- Properties of the Hamiltonian $H = \mu\sum_{j=1}^n \left( \sigma_z^j+ 1 \right) \nobreak -\nobreak \lambda    \bigotimes_{j=1}^n n \sigma_x^j$.
\vspace{7pt}

 Appendix~\ref{sec-app:saturability} --- Conditions for saturability of the multiparameter Cram\'er-Rao bound: \\
proof of Eq.~\eqref{eq:saturability} in the main text.
\vspace{7pt}

 Appendix~\ref{sec-app:HamiltonianLearning} --- Comparisons with the literature on Hamiltonian learning.
 \vspace{7pt}

\section{The quantum Fisher information matrix}
\label{sec-app:FisherInfo}
In this section, we derive the closed-form expression for the quantum Fisher information matrix of Hamiltonian parameters for thermal states.

The quantum Fisher information matrix has elements
\begin{align}
\label{eq-app:QFIMatrix}
    \info_{lm} \coloneqq 2 \sum_{jk} \frac{ \textnormal{Re} \left[ \bra{j} \partial_l \rho \ket{k}\!  \bra{k} \partial_m \rho \ket{j} \right] }{p_j + p_k} \, .
\end{align}
We have defined $\partial_l \coloneqq \tfrac{\partial}{\partial \mu_l}$.
The matrix characterizes the precision with which multiple parameters $\mu_l$ can be estimated. Let $\mathcal{N}$ denote the number of measurements performed. The multiparameter Cram\'er-Rao bound says that~\cite{liu2019quantum}
\begin{align}
    \label{eq-app:multiCR}
    \cov(\hat{\vec{\mu}}) \geq \frac{1}{\mathcal{N}}\info^{-1} \, .
\end{align}
This bound is asymptotically saturable if and only if
\begin{align}
    \tr{\rho [L_l,L_m]}
= 0 .
\end{align}
The symmetric logarithmic derivative is defined by $\partial_l \rho = \tfrac{1}{2}\{ L_l,\rho\}$. 

Throughout this appendix, we omit the temperature dependence from the partition-function notation: $Z \equiv Z_\beta$.
Since $\rho = e^{-\beta H}/Z = \sum_j e^{-\beta \omega_j}\ket{j}\!\bra{j}/Z$, the derivative in Eq.~\eqref{eq-app:QFIMatrix} is
\begin{align}
\label{eq-app:derivrho}
\partial_l \rho &= \frac{1}{Z} \partial_l e^{-\beta H} - \rho \frac{ \partial_l Z }{Z}  =  \frac{1}{Z} \left[  \partial_l e^{-\beta H}  - \rho \tr{\partial_l e^{-\beta H}} \right].
\end{align}
We must calculate the matrix elements of $\partial_l e^{-\beta H}$. 
 Using the Taylor series
\begin{align}
e^{-\beta H} = \sum_{n=0}^\infty (-\beta)^n \frac{H^n}{n!} \, ,
\end{align}
we obtain
\begin{align}
\label{eq-app:offdiagderivrho}
\bra{j} \partial_l e^{-\beta H} \ket{k} &=  \sum_{n=0}^\infty \frac{(-\beta)^n}{n!} \bra{j} \partial_l H^n \ket{k} \nonumber \\
&=  \sum_{n=1}^\infty \frac{(-\beta)^n}{n!} \bra{j} \sum_{m = 0}^{n-1} H^m A_l H^{n-m-1} \ket{k} \nonumber \\
&=  \sum_{n=1}^\infty \frac{(-\beta)^n}{n!}   \sum_{m = 0}^{n-1} \omega_j^m  \omega_k^{n-m-1} \bra{j} A_l \ket{k} \nonumber \\
&= \bra{j} A_l \ket{k} \sum_{n=1}^\infty \frac{(-\beta)^n}{n!}   \sum_{m = 0}^{n-1} \omega_j^m  \omega_k^{n-m-1} \nonumber \\
&\eqqcolon \bra{j} A_l \ket{k} \Gamma_{jk} . 
\end{align}
We have defined 
\begin{align}
\Gamma_{jk} \coloneqq \sum_{n=1}^\infty \frac{(-\beta)^n}{n!}   \sum_{m = 0}^{n-1} \omega_j^m  \omega_k^{n-m-1}
\end{align}
as a function of the temperature and of the Hamiltonian's spectrum.

We can re-express $\Gamma_{jk}$ using the formula for an infinite geometric series: if $\omega_j \neq \omega_k$, then
\begin{align}
\label{eq-app:gammaoffdiag}
 \Gamma_{jk} & = \sum_{n=1}^\infty \frac{(-\beta)^n}{n!}  \frac{\omega_j^{n}-\omega_k^{n}}{\omega_j-\omega_k} =\frac{e^{-\beta \omega_j} - e^{-\beta \omega_k}}{\omega_j-\omega_k} = Z \frac{\left( p_j  - p_k \right)}{\omega_j - \omega_k} 
   \, , \qquad \text{for \quad }\omega_j \neq \omega_k .
\end{align}
If $\omega_j = \omega_k$, then
\begin{align}
\label{eq-app:gammadiag}
\Gamma_{jk}  &= \sum_{n=1}^\infty \frac{(-\beta)^n}{n!}   \sum_{m = 0}^{n-1} \omega_j^{n-1} = \sum_{n=1}^\infty \frac{(-\beta)^n}{n!} n \omega_j^{n-1} \nonumber \\
&=  - \beta \sum_{n=1}^\infty \frac{(-\beta)^{n-1}}{\left(n-1\right)!}  \omega_j^{n-1} = - \beta e^{-\beta \omega_j} \nonumber \\
&= -  \beta Z p_j 
\, , \qquad \text{for \quad }\omega_j = \omega_k. 
\end{align}
Using Eqs.~\eqref{eq-app:gammadiag} and~\eqref{eq-app:offdiagderivrho} we can evaluate the first term in Eq.~\eqref{eq-app:derivrho}:
\begin{align}
\label{eq-app:traceaux}
\frac{\tr{\partial_l e^{-\beta H}}}{Z} &= \frac{1}{Z} \sum_j \bra{j} A_l \ket{j} \Gamma_{jj}  = - \beta \frac{1}{Z} \sum_j \bra{j} A_l \ket{j} e^{-\beta \omega_j}   = - \beta \langle A_l \rangle.
\end{align}
We denote thermal averages by $\langle A_l \rangle \coloneqq \tr{A_l \rho}$.

Substituting from Eq.~(\ref{eq-app:traceaux}) into Eq.~(\ref{eq-app:derivrho}) yields
\begin{align}
\partial_l \rho &=  \frac{1}{Z}  \partial_l e^{-\beta H}  + \beta \langle A_l \rangle \rho .
\end{align}
Furthermore, substituting into Eq.~\eqref{eq-app:offdiagderivrho} from Eqs.~(\ref{eq-app:gammaoffdiag}) and~(\ref{eq-app:gammadiag}) yields
\begin{align}
\bra{j} \partial_l e^{-\beta H} \ket{k} &=\bra{j} A_l \ket{k} Z \frac{\left( p_j  - p_k \right)}{\omega_j - \omega_k} \, , \qquad \text{for \quad }\omega_j \neq \omega_k,
\end{align}
 and 
\begin{align}
\bra{j} \partial_l e^{-\beta H} \ket{k} &=\bra{j} A_l \ket{k} \Gamma_{jj} =- \bra{j} A_l \ket{k} \beta Z p_j \qquad \text{for \quad } \omega_j = \omega_k.
\end{align}

Let $\delta A_l \coloneqq A_l - \langle A_l \rangle = A_l - \tr{\rho A_l}$. If $\omega_j \neq \omega_k$, then
\begin{align}
\label{eq-app:statederivOffDiag}
    \bra{j} \partial_l \rho \ket{k} &= \bra{j} A_l \ket{k} \frac{\left( p_j  - p_k \right)}{\omega_j - \omega_k} +  \beta \langle A_l \rangle \bra{j} \rho \ket{k}
    = \bra{j} \delta A_l \ket{k} \frac{\left( p_j  - p_k \right)}{\omega_j - \omega_k} \, ,
      \qquad \textnormal{for } \omega_j \neq \omega_k,
\end{align}
whereas, if $\omega_j = \omega_k$,
\begin{align}
\label{eq-app:statederivDiag}
    \bra{j} \partial_l \rho \ket{k} &=  -  \bra{j} A_l \ket{k} \beta  p_j  + \beta \langle A_l \rangle p_j \delta_{jk} 
     =  -  \bra{j} \delta A_l \ket{k} \beta  p_j =  -  \bra{j} \delta A_l \ket{k} \beta  \frac{p_j+p_k}{2} \, , \qquad \textnormal{for } \omega_j = \omega_k.
\end{align}
 
Thus, the quantum Fisher information matrix in Eq.~\eqref{eq-app:QFIMatrix} becomes
\begin{align}
    \info_{lm} &\coloneqq 2 \sum_{j k} \frac{ \textnormal{Re} \left[ \bra{j} \partial_l \rho \ket{k}\!  \bra{k} \partial_m \rho \ket{j} \right] }{p_j + p_k} \nonumber \\
    &=  2 \sum_{\omega_j \neq \omega_k} \frac{ (p_j-p_k)^2}{(p_j + p_k)(\omega_j-\omega_k)^2} \textnormal{Re} \left[ \delta A_{jk}^l  \delta A_{kj}^m \right] +   \sum_{\omega_j = \omega_k} \beta^2 \frac{p_j+p_k}{2} \textnormal{Re} \left[ \delta A_{jk}^l  \delta A_{kj}^m \right] \nonumber \\
     &=2\beta^2\sum_{\omega_j \neq \omega_k} p_j \frac{ (1-p_k/p_j)^2}{(1+ p_k/p_j)\ln^2(p_k/p_j)} \textnormal{Re} \left[ \delta A_{jk}^l \delta A_{kj}^m \right]  +  \beta^2  \sum_{\omega_j = \omega_k} \frac{p_j+p_k}{2} \textnormal{Re} \left[ \delta A_{jk}^l  \delta A_{kj}^m \right]. \label{eq_Flm_1}
\end{align}
In Appendix \ref{sec-app:BoundsOnFisher}, we use this expression to upper- and lower-bound $\info_{ll}$.

\section{Bounds on the quantum Fisher information}
\label{sec-app:BoundsOnFisher}

In this section, we upper- and lower-bound the diagonals of the quantum Fisher information matrix. That is, we prove Eqs.~\eqref{eq:BoundsDiagonalDeltaH} 
and~\eqref{eq:BoundsDiagonalWY} from the main text. 
By Eq.~\eqref{eq_Flm_1}, the quantum Fisher information about a parameter $\mu_l$ is
\begin{align}
\label{eq-app:FisherSingle}
    \info_{ll}   
    &=2\beta^2 \sum_{\omega_j \neq \omega_k}  p_j \frac{ (1-p_k/p_j)^2}{(1+ p_k/p_j)\ln^2(p_k/p_j)}  \big| \delta A_{jk}^l \big|^2  + \beta^2  \sum_{\omega_j = \omega_k} \frac{p_j+p_k}{2} \big| \delta A_{jk}^l \big|^2.
\end{align}

\subsection{Upper bound in terms of $(\Delta A_l)^2$}

If $x \coloneqq p_k/p_j$, the first term in the quantum Fisher information [Eq.~\eqref{eq-app:FisherSingle}] depends on $\frac{(1-x)^2}{(1+x)\ln^2(x)}$. It will be convenient to upper-bound this fraction as $\frac{(1-x)^2}{(1+x)\ln^2(x)} \leq (1+x) c_1$, for some $c_1$ to be determined. 
Shifting the $(1+x)$ from the inequality's right-hand side to the left-hand side, we form a fraction 
$\frac{(1-x)^2}{(1+x)^2\ln^2(x)}$ 
that is maximized at $x = 1$. Furthermore, $p_k/p_j$ comes closest to $1$ for energy eigenstates whose energies are as close as possible:
$x_\M \coloneqq e^{-\beta \min_{j,k} \{\omega_k - \omega_j\} } \equiv e^{-\beta \gm }$. We have defined $\gm \coloneqq \min_{j,k} \{\omega_j - \omega_k\}$ as the Hamiltonian's minimum energy gap. 
Combining these observations, we choose 
\begin{align}
c_1(\gm) \coloneqq \frac{(1-x_\M)^2}{(1+x_\M)^2 \ln^2(x_\M)} = \frac{(1-e^{-\beta \gm})^2}{(1+e^{-\beta \gm})^2}\frac{1}{\beta^2 \gm^2} = \frac{\tanh^2(\beta \gm/2)}{\beta^2 \gm^2} \, .
\end{align}
The limiting values of $c_1$, as a function of temperature, are
\begin{align}
    c_1(\gm) \approx
    \begin{cases} & \frac{1}{\beta^2 \gm^2} 
    \qquad \textnormal{for} \qquad \beta \gm \gg 1 \\
     & \frac{1}{4} \qquad \textnormal{for} \qquad \beta \gm \ll 1.
     \end{cases}
\end{align}
Applying this choice and the general bound above to Eq.~\eqref{eq-app:FisherSingle}, we bound the quantum Fisher information about a parameter $\mu_l$:
\begin{subequations}
\begin{align}
    \info_{ll} &=2\sum_{\omega_j \neq \omega_k} \beta^2 p_j \frac{ (1-p_k/p_j)^2}{(1+ p_k/p_j)\ln^2(p_k/p_j)}  \big| \delta A_{jk}^l \big|^2   + \beta^2  \sum_{\omega_j = \omega_k}  \frac{p_j+p_k}{2} \big| \delta A_{jk}^l \big|^2 \nonumber \\
    &\leq  2 c_1(\gm) \beta^2 \sum_{\omega_j \neq \omega_k}  p_j  \left( 1+\frac{p_k}{p_j} \right) \left|   \delta A_{jk}^l   \right|^2  + \beta^2  \sum_{\omega_j = \omega_k} \frac{p_j+p_k}{2} \big| \delta A_{jk}^l \big|^2 \label{eqq-app:QFIAux1} \\ 
    &=  4 c_1(\gm)  \beta^2 \sum_{\omega_j \neq \omega_k}  p_j  \left|   \delta A_{jk}^l   \right|^2  + 4 c_1(\gm)  \beta^2 \sum_{\omega_j = \omega_k}  p_j \big| \delta A_{jk}^l \big|^2 + [1-4 c_1(\gm) ] \beta^2 \sum_{\omega_j = \omega_k}  p_j \big| \delta A_{jk}^l \big|^2  \nonumber \\
    &= 4 c_1(\gm)  \beta^2 \tr{ \rho \left[ \delta A_l \right]^2} +  [1-4 c_1(\gm) ] \beta^2 \sum_{\omega_j = \omega_k}  p_j \big| \delta A_{jk}^l \big|^2 \nonumber \\
    &=\,  4 c_1(\gm)  \beta^2 \left( \Delta A_l \right)^2 
    +  [1-4 c_1(\gm) ] \beta^2 \left( \Delta A_l^D \right)^2. \label{eqq-app:QFIAux2} 
\end{align}
\end{subequations}
We have defined $\Delta A = \sqrt{\langle A^2 \rangle - \langle A \rangle^2}$ as the standard deviation of an operator $A$ in the thermal state. Also, $A_l^D \nobreak\coloneqq\nobreak \sum_{\omega_j = \omega_k} \bra{j}A_l\ket{k} \ket{k}\!\bra{j}$ is the sum of the block-diagonal elements of the matrix that represents $A_l$ relative to the energy eigenbasis. Since $0 \leq c_1(\gm)  \leq 1/4$ in Eq.~\eqref{eqq-app:QFIAux2}, also, 
\begin{align}
\label{eqq-app:QFIAux3}
    \info_{ll} \leq \beta^2 \left( \Delta A_l \right)^2.
\end{align}
We have proved Eq.~\eqref{eq:UpperBoundsDiagonalDeltaH} in the main text.
Bounds~\eqref{eqq-app:QFIAux2} and~\eqref{eqq-app:QFIAux3} are saturated if $A_l$ is diagonal relative to the energy eigenbasis.

\subsection{Lower bound in terms of $(\Delta A_l)^2$}

A similar derivation implies a lower bound on $\info_{ll}$. The function $\frac{(1-x)^2}{(1+x)^2\ln^2(x)}$ is minimized at $x = 0$ and in the limit as $x \rightarrow \infty$. Moreover, $x$ has a minimum value of $x_\m \coloneqq e^{-\beta \gM}$, where $\gM \coloneqq \max_j \omega_j - \min_j \omega_j$, and a maximum value of $x_\M \coloneqq e^{\beta \gM}$. Since $c_1(-\gM) = c_1(\gM)$, 
\begin{align}
\label{eqq-app:QFIAux4}
\frac{(1-x)^2}{(1+x)\ln^2(x)} \geq (1+x) c_1(\gM).
\end{align}
Using Eqs.~\eqref{eqq-app:QFIAux4} and~\eqref{eq-app:FisherSingle} leads to 
\begin{align}
    \info_{ll} &=2\sum_{\omega_j \neq \omega_k} \beta^2 p_j \frac{ (1-p_k/p_j)^2}{(1+ p_k/p_j)\ln^2(p_k/p_j)}  \big| \delta A_{jk}^l \big|^2   +   \beta^2  \sum_{\omega_j = \omega_k}  \frac{p_j+p_k}{2} \big| \delta A_{jk}^l \big|^2 \nonumber \\
    &\geq  2 c_1(\gM)  \sum_{\omega_j \neq \omega_k}  \beta^2 p_j  \left( 1+\frac{p_k}{p_j} \right) \left|   \delta A_{jk}^l   \right|^2  +  \beta^2  \sum_{\omega_j = \omega_k}  \frac{p_j+p_k}{2} \big| \delta A_{jk}^l \big|^2 \nonumber \\ 
    &=  4 c_1(\gM)  \beta^2 \sum_{\omega_j \neq \omega_k}  p_j  \left|   \delta A_{jk}^l   \right|^2  + 4 c_1(\gM)  \beta^2  \sum_{\omega_j = \omega_k}  \frac{p_j+p_k}{2} \big| \delta A_{jk}^l \big|^2 + [1-4 c_1(\gM) ] \beta^2  \sum_{\omega_j = \omega_k}  \frac{p_j+p_k}{2} \big| \delta A_{jk}^l \big|^2 \nonumber \\
    &= 4 c_1(\gM)  \beta^2 \tr{ \rho \left( \delta A_l \right)^2} +  [1-4 c_1(\gM) ] \beta^2 \sum_{j}  p_j \big| \delta A_{jk}^l \big|^2 \nonumber \\
    &=\,  4 c_1(\gM)  \beta^2 \left( \Delta A_l \right)^2.
\end{align}
We have proved Eq.~\eqref{eq:LowerBoundsDiagonalDeltaH} in the main text.

\subsection{Upper bound in terms of $(\Delta A_l)^2 - \tfrac{1}{2}\| [\sqrt{\rho},A_l]\|_2^2$}

We can obtain a distinct upper bound that depends on the Wigner-Yanase skew information. Beginning with Eq.~\eqref{eq-app:FisherSingle}, we split the sum over $\omega_j \neq \omega_k$ into $\omega_j < \omega_k$ and $\omega_j > \omega_k$ terms. We can then collapse terms 
due to the symmetry with respect to the interchange $p_j \leftrightarrow p_k$:
\begin{align}
    \label{eq-app:Faux2}\info_{ll} &=2\sum_{\omega_j \neq \omega_k} \beta^2 p_j \frac{ (1-p_k/p_j)^2}{(1+ p_k/p_j)\ln^2(p_k/p_j)}  \big| \delta A_{jk}^l \big|^2   +   \beta^2  \sum_{\omega_j = \omega_k}  \frac{p_j+p_k}{2} \big| \delta A_{jk}^l \big|^2 \nonumber \\
    &= 2\sum_{\omega_j > \omega_k} \beta^2 p_j \frac{ (1-p_k/p_j)^2}{(1+ p_k/p_j)\ln^2(p_k/p_j)}  \big| \delta A_{jk}^l \big|^2 + 2\sum_{\omega_j < \omega_k} \beta^2 p_j \frac{ (1-p_k/p_j)^2}{(1+ p_k/p_j)\ln^2(p_k/p_j)}  \big| \delta A_{jk}^l \big|^2   +   \beta^2  \sum_{\omega_j = \omega_k}  \frac{p_j+p_k}{2} \big| \delta A_{jk}^l \big|^2 \nonumber \\
     &=4\sum_{\omega_j < \omega_k} \beta^2 p_j \frac{ (1-p_k/p_j)^2}{(1+ p_k/p_j)\ln^2(p_k/p_j)}  \big| \delta A_{jk}^l \big|^2   +  \beta^2  \sum_{\omega_j = \omega_k}  p_j \big| \delta A_{jk}^l \big|^2.
\end{align}
Assume that the energies $\omega_j$ are in ascending order, such that $x_\m \leq x \coloneqq p_k/p_j \leq 1$, for $j < k$.
The first term in~\eqref{eq-app:Faux2} contains a factor of the form $\frac{(1-x)^2}{(1+x)\ln^2(x)}$, which obeys the upper bound
$\frac{(1-x)^2}{(1+x)\ln^2(x)} \nobreak  \leq \nobreak  c_2 \sqrt x $ for $ 0 \leq x \leq 1$, for some $c_2$. 
The minimum value of $x$, at an inverse temperature $\beta$, is $x_\m \coloneqq \min_{\{j,k\}} p_k/p_j = \min_{\{j,k\}} e^{-\beta(\omega_k - \omega_j)} = e^{-\beta \|H\|_s}$. Therefore, 
\begin{align}
\label{eq-app:c2aux}
    c_2 \coloneqq \frac{1}{\beta^2 \| H \|_s^2} e^{\tfrac{1}{2}\beta \|H \|_s} \frac{ \left( 1-e^{-\beta \|H\|_s} \right)^2}{1+ e^{-\beta \| H \|_s}} = \frac{2\sinh(\beta \gM/2) \tanh(\beta \gM/2)}{\beta^2 \gM^2} \geq  0.42.
\end{align}
 The inequality holds because $2\sinh(x/2)\tanh(x/2)/x^2 \geq 0.42$ (as one can check using, e.g., Mathematica).
The limiting values of $c_2$, as a function of temperature, are
\begin{align}
    c_2 \approx 
    \begin{cases} 
    &  e^{\tfrac{1}{2}\beta \|H \|_s} / (\beta^2 \| H \|_s^2) \, ,
    \qquad \textnormal{for} \qquad \beta \| H \|_s \gg 1, \\
    & 1/2 \, , \qquad \textnormal{for} \qquad \beta \| H \|_s \ll 1.
    \end{cases}
\end{align}

Let us apply Eq.~\eqref{eq-app:c2aux}, with the general bound above, to Eq.~\eqref{eq-app:Faux2}:
\begin{subequations}
    \begin{align}
    \info_{ll} &=4\sum_{\omega_j < \omega_k} \beta^2 p_j \frac{ (1-p_k/p_j)^2}{(1+ p_k/p_j)\ln^2(p_k/p_j)}  \big| \delta A_{jk}^l \big|^2   +    \beta^2  \sum_{\omega_j = \omega_k}  p_j \big| \delta A_{jk}^l \big|^2 \nonumber \\
    &\leq 4c_2 \sum_{\omega_j < \omega_k} \beta^2 p_j \sqrt{\frac{p_k}{p_j}}  \big| \delta A_{jk}^l \big|^2   +   \beta^2  \sum_{\omega_j = \omega_k}  p_j \big| \delta A_{jk}^l \big|^2 \nonumber \\
    &= 2c_2 \beta^2 \sum_{\omega_j < \omega_k}  \sqrt{p_j} \sqrt{ p_k}  \big| \delta A_{jk}^l \big|^2 + 2c_2 \beta^2\sum_{\omega_j > \omega_k}  \sqrt{p_j} \sqrt{ p_k}  \big| \delta A_{jk}^l \big|^2   +    \beta^2  \sum_{\omega_j = \omega_k}  p_j \big| \delta A_{jk}^l \big|^2 \nonumber \\
    &\leq 2c_2 \beta^2 \sum_{\omega_j < \omega_k}  \sqrt{p_j} \sqrt{ p_k}  \big| \delta A_{jk}^l \big|^2 + 2c_2 \beta^2\sum_{\omega_j > \omega_k}  \sqrt{p_j} \sqrt{ p_k}  \big| \delta A_{jk}^l \big|^2   +    \frac{c_2}{0.42} \beta^2  \sum_{\omega_j = \omega_k}  p_j \big| \delta A_{jk}^l \big|^2 \label{eq:app-Faux} \\
    &\leq 2.4 c_2 \beta^2 \sum_{\omega_j < \omega_k}  \sqrt{p_j} \sqrt{ p_k}  \big| \delta A_{jk}^l \big|^2 + 2.4 c_2 \beta^2\sum_{\omega_j > \omega_k}  \sqrt{p_j} \sqrt{ p_k}  \big| \delta A_{jk}^l \big|^2   +    2.4 c_2 \beta^2  \sum_{\omega_j = \omega_k}  p_j \big| \delta A_{jk}^l \big|^2 \label{eq:app-Faux2} \\
    &= 2.4 c_2 \, \beta^2 \, \tr{\sqrt{\rho} \delta A_l \sqrt{\rho} \delta A_l}.
\end{align}
\end{subequations}
In Eqs.~\eqref{eq:app-Faux} and~\eqref{eq:app-Faux2}, we invoked $1 \leq c_2/0.42 \leq 2.4 c_2$.
Since $\tr{\sqrt{\rho} \delta A_l \sqrt{\rho} \delta A_l} = \left( \Delta A_l \right)^2 -    \tfrac{1}{2} \big\| \left[\sqrt{\rho}, A_l \right] \big\|_2^2$, we have proved the second upper bound on $\info_{ll}$, Eq.~\eqref{eq:BoundsDiagonalWY} in the main text.

\subsection{Lower bound in terms of $(\Delta A_l)^2 - \tfrac{1}{2}\| [\sqrt{\rho},A_l]\|_2^2$}

Our general expression $\frac{(1-x)^2}{(1+x)\ln^2(x)}$ obeys the upper bound $ \sqrt x / 2.5 \nobreak \leq \nobreak  \frac{(1-x)^2}{(1+x)\ln^2(x)}$. Applying this bound to Eq.~\eqref{eq-app:FisherSingle} yields
\begin{align}
\label{eq-app:proofboundsInfoDiag}
    \info_{ll} &\geq  \frac{2}{2.5}\sum_{\omega_j \neq \omega_k}  \beta^2 p_j  \sqrt{\frac{p_k}{p_j}} \left|   \delta A_{jk}^l   \right|^2  +    \beta^2  \sum_{\omega_j = \omega_k}  p_j \big| \delta A_{jk}^l \big|^2 \nonumber \\
    &\geq   0.8 \, \beta^2 \sum_{\omega_j \neq  \omega_k}  \sqrt{p_j}  \sqrt{p_k} \left|   \delta A_{jk}^l   \right|^2  + 0.8 \,    \beta^2  \sum_{\omega_j = \omega_k}  p_j \big| \delta A_{jk}^l \big|^2 \nonumber \\
    &=0.8 \, \beta^2 \tr{\sqrt{\rho} \delta A_l \sqrt{\rho} \delta A_l}.
\end{align}
This result completes the proof of Eq.~\eqref{eq:BoundsDiagonalWY} in the main text.

\section{Noncommutativity and Parameter Estimation}\label{app:nc-and-Hamiltonian-evolution}

In this section, we discuss the role of noncommutativity in parameter estimation.
In Eq.~\eqref{eq:singleparamBound} of the main text, we presented an upper and a lower bound on the optimal relative estimation error $\sqrt{\var_{\textnormal{opt}}(\hat{\mu_l})}/|\mu_l|$ with which a parameter $\mu_l$ can be estimated from $\mathcal{N}$ copies of a thermal state. We reproduce the bound here for convenience:
\begin{subequations}
\begin{align}\label{eq-app:single-param-bnd}
\frac{1}{\sqrt{2.4c_2} \beta \, \sqrt{\mathcal{N}} \bigg( \left( \Delta H_l \right)^2 -   \tfrac{1}{2} \Big\| \big[\sqrt{\rho}, H_l \big] \Big\|_2^2 \bigg)^{1/2} } \leq \frac{ \sqrt{\textnormal{var}_{\textnormal{opt}} ( \hat\mu_l ) }}{|\mu_l|} \leq \frac{1}{\sqrt{0.8} \beta \, \sqrt{\mathcal{N}} \bigg( \left( \Delta H_l \right)^2 -   \tfrac{1}{2} \Big\| \big[\sqrt{\rho}, H_l \big] \Big\|_2^2 \bigg)^{1/2} } \, .
\end{align}
\end{subequations}
Recall that $H_l$ is the Hamiltonian term that contains the parameter $\mu_l$.
Due to the $\|[\sqrt{\rho}, H_l]\|_2$, noncommutativity between the state and $H_l$ negatively impacts one's ability to estimate $\mu_l$. Here, we elaborate on the role of noncommutativity in estimating a  parameter from Hamiltonian evolution (as opposed to from a thermal state).

In the Hamiltonian-evolution setting, we estimate $\mu_l$ by evolving a probe state under a Hamiltonian 
\begin{equation}
H=H_l+H' \equiv \mu_l A_l +H'
\end{equation}
for some time $t$. $A_l$ is a Hermitian matrix (the generator of translations associated with $\mu_l$). $H'$ contains all the (possibly time-dependent) terms independent of $\mu_l$. In our setting, $H'=\sum_{j\neq l}H_j$. The time-evolved state $\varrho(t)$ depends on $\mu_l$. One can estimate $\mu_l$ from properly chosen measurements of copies of $\varrho(t)$. The minimum achievable variance is bounded in the single-parameter quantum Cram\'{e}r-Rao bound, Eq.~(5) in the main text. 

The minimal variance can be achieved with a pure probe state 
$\varrho=\ket{\psi}\!\bra{\psi}$. We have defined 
$\ket\psi=(\ket{\lambda_\mathrm{max}}+\ket{\lambda_\mathrm{min}})/\sqrt{2}$, $\ket{\lambda_\mathrm{max}}$ and $\ket{\lambda_\mathrm{min}}$ denoting the eigenstates associated with the maximum and minimum $A_l$ eigenvalues, $\lambda_\mathrm{max}$ and $\lambda_\mathrm{min}$~\cite{giovannetti2006quantum}. Suppose that 
$\ket{\lambda_\mathrm{max}}$ and $\ket{\lambda_\mathrm{min}}$ are $H'$ eigenstates associated with unit eigenvalues:
\begin{align}
   \label{eq_eigen_condns}
   H'\ket{\lambda_\mathrm{max}}=\ket{\lambda_\mathrm{max}}
   , \quad \text{and} \quad
   H'\ket{\lambda_\mathrm{min}}=\ket{\lambda_\mathrm{min}} .
\end{align}
Evolution under $H$ yields a final state $\ket{\psi(t)}=(\ket{\lambda_\mathrm{max}}+e^{(\lambda_\mathrm{max}-\lambda_\mathrm{min})\mu_l t}\ket{\lambda_\mathrm{min}})/\sqrt{2}$, from which $\mu_l$ can be extracted with a variance $\sim [t(\lambda_\mathrm{max}-\lambda_\mathrm{min})]^{-2}$, which is optimal~\cite{boixo2008quantum}. 

The conditions~\eqref{eq_eigen_condns}, under which this optimal scheme works, can be replaced with the weaker condition $[H',H_l]=[H,H_l]=0$. From here, we see the connection to Eq.~(\ref{eq-app:single-param-bnd}): for Gibbs states $\rho$, if $[H, H_l]=0$, then $[\sqrt{\rho}, H_l]=0$. Consequently, we see a direct formal connection between the fact that noncommutativity of $H_l$ with $H$ negatively impacts the estimation of $\mu_l$ through Hamiltonian evolution and the fact that $[\sqrt{\rho}, H_l]\neq 0$ negatively impacts Hamiltonian learning from Gibbs states.

\section{Quantum Fisher information of a two-level system}
\label{sec-app:QubitExample}
 
In this section, we calculate the quantum Fisher information about a parameter in a single-qubit Hamiltonian.
Consider the Hamiltonian 
\begin{align}
H = \Omega_x \sigma_x + \Omega_z \sigma_z + \mu \sigma_z \eqqcolon \vec{v}\cdot\vec{\sigma} .
\end{align}
We have defined the vector $\vec{v} = (\Omega_x,\Omega_z+\mu)$ with the norm $v \coloneqq \sqrt{\Omega_x^2  + (\Omega_z + \mu)^2}$, and $\vec{\sigma} = (\sigma_x,\sigma_z)$ is a vector of Pauli matrices. We aim to estimate $\mu$, so $A = \sigma_z$. The thermal state is
\begin{align}
\label{eq-app:qubitAux}
\rho = \frac{e^{-\beta v \,  \vec{v}\cdot\vec{\sigma}/v }}{Z}=\frac{\cosh(\beta v)\id - \sinh(\beta v) \vec{v}\cdot\vec{\sigma}/v}{Z},
\end{align}
where $Z = 2 \cosh(\beta v)$. The Hamiltonian has a seminorm $\|H\|_s = 2v$.

We directly calculate the Wigner-Yanase skew information,
using $\sqrt{\rho} = e^{-\beta H/2}/\sqrt{Z}$, $A = \sigma_z$, and Eq.~\eqref{eq-app:qubitAux}:
\begin{align}
\label{eq-app:qubitAux2}
\tfrac{1}{2}\big\| [\sqrt{\rho},A] \big\|_2^2 
&= \frac{1}{2 Z} \big\| \sinh(\beta v/2)/v [\vec{v}\cdot\vec{\sigma},\sigma_z] \big\|_2^2 = \frac{\sinh^2(\beta v/2)}{2 v^2 Z} \big\| \Omega_x [\sigma_x,\sigma_z]   \big\|_2^2  \nonumber \\
&= \frac{\sinh^2(\beta v/2)}{2  v^2 Z} \big\| -2i \Omega_x \sigma_y  \big\|_2^2  = 2 \frac{\sinh^2(\beta v/2)}{v^2 Z} \tr{ \big[ -i \Omega_x \sigma_y   \big] \big[ i \Omega_x \sigma_y  \big]}  \nonumber \\
&= 4 \frac{\sinh^2(\beta v/2)}{v^2 Z}   \Omega_x^2    
  =   
    4 \frac{\tfrac{1}{2}(\cosh(\beta v)-1)}{v^2 Z}  \Omega_x^2  =        
    2 \frac{ \cosh(\beta v)-1 }{2 v^2  \cosh(\beta v)}   \Omega_x^2  \nonumber \\
    &=  \frac{ 1 - \sech(\beta v) }{v^2}   \Omega_x^2 \, .
\end{align}
The thermal variance in $\sigma_z$ is 
\begin{align}
\label{eq-app:VarQbit}
(\Delta A)^2 &= \tr{\rho} - [\tr{\rho \sigma_z}]^2 
= 1 - \left[ - \frac{1}{v Z} \sinh(\beta v) \tr{\vec{v}\cdot\vec{\sigma} \sigma_z} \right]^2 = 1 - \frac{4 \sinh^2(\beta v)}{v^2 Z^2} (\Omega_z+\mu)^2 \nonumber \\
& = 1- \frac{\tanh^2(\beta v)}{v^2} (\Omega_z+\mu)^2 \, .
\end{align}
Subtracting Eq.~\eqref{eq-app:qubitAux2} from~\eqref{eq-app:VarQbit} yields
\begin{align}
(\Delta A)^2 - \tfrac{1}{2}\big\| [\sqrt{\rho},A] \big\|_2^2 &= 1- \frac{\tanh^2(\beta v)}{v^2} (\Omega_z+\mu)^2 - \frac{ 1 - \sech(\beta v) }{v^2}   \Omega_x^2 \,  .
\end{align}

We can approximate this expression at high and low temperatures.
If the temperature is high ($\beta v \ll 1$), then $\sech(\beta v) \approx 1 - (\beta v)^2/2 $, and $\tanh(\beta v) \approx \beta v$. Therefore,
\begin{align}
\label{eq-app:FisherQubitAux1}
(\Delta A)^2 - \tfrac{1}{2}\big\| [\sqrt{\rho},A] \big\|_2^2 &\approx 1 - \beta^2 (\Omega_z + \mu)^2 - \frac{\beta^2}{2}  \Omega_x^2  = 1 - \frac{\beta^2}{2} (\Omega_z + \mu)^2 - \frac{\beta^2}{2}v^2 \, .
\end{align}
If the temperature is small, ($\beta v \gg 1$), then $\sech(\beta v) \approx 2 e^{-\beta v}$, and $\tanh(\beta v) \approx 1$. Therefore,
\begin{align}
\label{eq-app:FisherQubitAux2}
(\Delta A)^2 - \tfrac{1}{2}\big\| [\sqrt{\rho},A] \big\|_2^2 &\approx 1 - \frac{1}{v^2} (\Omega_z + \mu)^2 - \frac{1}{v^2} \Omega_x^2  + \frac{2 e^{-\beta v}}{v^2}  \Omega_x^2  = \frac{2 e^{-\beta v}}{v^2}  \Omega_x^2.
\end{align}
Meanwhile, $c_2 \approx 1/2$ at high temperature ($\beta v \ll 1$), whereas $c_2 \approx e^{\beta \|H\|_s/2}/(\beta \|H\|_s)^2 = e^{\beta v}/(2\beta v)^2$ for $\beta v\gg 1$.

By Eq.~\eqref{eq-app:FisherQubitAux1}, at high temperatures ($\beta \|H\|_s \ll 1$) the 
bounds~\eqref{eq:BoundsDiagonalWY} in the main text become
\begin{subequations}
\label{eq-app:BoundsDiagonalWYhighT}
\begin{align}
\label{eq-app:upperBoundsDiagonalWYhighT}
\info  &\leq 2.4 c_2 \, \beta^2 \tr{\sqrt{\rho} \delta A \sqrt{\rho} \delta A} 
\approx \frac{2.4}{2}\beta^2 \left( 1 - \frac{\beta^2}{2} (\Omega_z + \mu)^2 - \frac{\beta^2}{2}v^2 \right) \approx 1.2 \beta^2, \\
\info  &\geq 0.8 \, \beta^2 \tr{\sqrt{\rho} \delta A \sqrt{\rho} \delta A} 
\approx 0.8 \, \beta^2 \left( 1 - \frac{\beta^2}{2} (\Omega_z + \mu)^2 - \frac{\beta^2}{2}v^2 \right) \approx 0.8 \, \beta^2.
\end{align}
\end{subequations}
By Eq.~\eqref{eq-app:FisherQubitAux2}, at low temperature ($\beta \|H\|_s \gg 1$) the 
bounds~\eqref{eq:BoundsDiagonalWY} become    
\begin{subequations}
\label{eq-app:BoundsDiagonalWYlowT}
\begin{align}
\label{eq-app:upperBoundsDiagonalWYlowT}
\info  &\leq 2.4 c_2 \, \beta^2 \tr{\sqrt{\rho} \delta A \sqrt{\rho} \delta A} 
\approx 2.4 \frac{e^{\beta v}}{4 v^2} \left( \frac{2 e^{-\beta v}}{v^2}  \Omega_x^2 \right) = 1.2\frac{\Omega_x^2}{v^4}, \\
\info  &\geq 0.8 \, \beta^2 \tr{\sqrt{\rho} \delta A \sqrt{\rho} \delta A} 
\approx 0.8 \, \beta^2 \left( \frac{2 e^{-\beta v}}{v^2}  \Omega_x^2 \right) = 1.6 \, \frac{\beta^2 e^{-\beta v} \Omega_x^2}{v^2} \, .
\end{align}
\end{subequations}

We want to compare these bounds with the values of the quantum Fisher information. Define the $\sigma_z$ eigenstates such that $\sigma_z \ket{1} = \ket{1}$ and $\sigma_z \ket{0} = -\ket{0}$.
$H = \vec{v}.\vec{\sigma}$ has the eigenvectors 
\begin{align}
\ket{+} &= \frac{1}{\sqrt{2v\big( \Omega_z + \mu +v \big)}} \Big( (\Omega_z + \mu + v) \ket{1}  +  \Omega_x   \ket{0} \Big)
\quad \text{and} \\
\ket{-} &= \frac{1}{\sqrt{2v\big( \Omega_z + \mu +v \big)}} \Big(  - \Omega_x  \ket{1}  +  (\Omega_z + \mu + v) \ket{0} \Big),
\end{align}
corresponding to eigenvalues $\pm v$. 
By the expression~\eqref{eq-app:FisherSingle} for the quantum Fisher information, for a qubit, 
\begin{align}
\label{eq-app:FisherQubitAux4}
    \info &=2\sum_{\omega_j \neq \omega_k}   \frac{ (p_j-p_k)^2}{(p_j+ p_k)(\omega_j-\omega_k)^2}  \big| \delta A_{jk}^l \big|^2   +   \sum_{\omega_j = \omega_k} \beta^2 p_j \big| \langle j | \delta A_l | k \rangle \big|^2 \nonumber \\
    &=\frac{4  }{Z} 
    \left(  \frac{ (e^{-\beta v} - e^{\beta v})^2}{(e^{-\beta v} + e^{\beta v})(2v)^2} \right)
     \big|\! \bra{+} \delta X \ket{-}\! \big|^2 +     \beta^2 \frac{e^{-\beta v}}{Z} \big| \langle + | \delta X | + \rangle \big|^2    +         \beta^2 \frac{e^{\beta v}}{Z} \big| \langle - | \delta X | -\rangle \big|^2 .
\end{align}
We evaluate this expression using
\begin{subequations}
\label{eq-app:FisherQubitAux3}
\begin{align}
\langle + | \delta A | - \rangle 
&=          \frac{1}{ 2v\big( \Omega_z + \mu +v \big) }  \Big( -\Omega_x(\Omega_z + \mu + v)  - \Omega_x (\Omega_z+\mu+v) \Big)  
= -\frac{\Omega_x}{ v } \, , \\
\langle + | \delta A | + \rangle 
&=          \frac{1}{ 2v\big( \Omega_z + \mu +v \big) } \Big(  (\Omega_z+\mu+v)^2 -\Omega_x^2  \Big) - \langle \sigma_z \rangle \coloneqq a - \langle \sigma_z \rangle,
\quad \text{and} \\
\langle - | \delta A | - \rangle &=  \frac{1}{ 2v\big( \Omega_z + \mu +v \big) } \Big(  \Omega_x^2  - (\Omega_z+\mu+v)^2 \Big)   -\langle \sigma_z \rangle \coloneqq -a - \langle \sigma_z \rangle .
\end{align}
\end{subequations}
We have defined $a \coloneqq \frac{1}{ 2v\big( \Omega_z + \mu +v \big) } \Big(  (\Omega_z+\mu+v)^2 -\Omega_x^2  \Big) = \frac{(\Omega_z+\mu)}{ v\big( \Omega_z + \mu +v \big) } \Big(   \Omega_z+\mu + v   \Big) = \frac{(\Omega_z+\mu)}{v}$.
Next, we evaluate Eq.~\eqref{eq-app:FisherQubitAux4} using Eq.~\eqref{eq-app:FisherQubitAux3}, $Z = 2 \cosh(\beta v)$, and  $\langle \sigma_z \rangle = - \frac{\tanh(\beta v)}{v} (\Omega_z+\mu)$ from Eq.~\eqref{eq-app:VarQbit}:
\begin{align}
    \info &=\frac{4 }{Z} 
    \left(  \frac{ (e^{-\beta v} - e^{\beta v})^2}{(e^{-\beta v} + e^{\beta v})(2v)^2} \right)\frac{\Omega_x^2}{v^2}
      +   \left( \beta^2 \frac{e^{-\beta v}}{Z}  +       \beta^2 \frac{e^{\beta v}}{Z} \right) (a^2+\langle \sigma_z\rangle^2) +   \left( \beta^2 \frac{e^{-\beta v}}{Z}  -       \beta^2 \frac{e^{\beta v}}{Z} \right) (-2a\langle \sigma_z \rangle) \nonumber \\
      &=  \frac{ (-e^{-\beta v} + e^{\beta v})}{2\cosh(\beta v) } \tanh(\beta v)  \frac{\Omega_x^2}{v^4}
      +   \beta^2  \frac{e^{-\beta v} + e^{\beta v}}{2\cosh(\beta v)} (a^2+\langle \sigma_z\rangle^2) -    \beta^2 \frac{e^{-\beta v}- e^{\beta v}}{2\cosh(\beta v)} 2a\langle \sigma_z \rangle \nonumber \\
      &=   \tanh^2(\beta v)  \frac{\Omega_x^2}{v^4}
      +   \beta^2 (a^2+\langle \sigma_z \rangle^2) 
      +    2 \beta^2 \tanh(\beta v)  a\langle \sigma_z \rangle  \nonumber \\
      &= \tanh^2(\beta v)  \frac{\Omega_x^2}{v^4}
      +   \beta^2  \left( \frac{(\Omega_z+\mu)^2}{v^2} + \tanh^2(\beta v) \frac{(\Omega_z+\mu)^2  }{v^2} \right)
     -  2 \beta^2 \tanh^2(\beta v)  \frac{(\Omega_z+\mu)^2}{v^2}  \nonumber \\
     &= \tanh^2(\beta v)  \frac{\Omega_x^2}{v^4}
      +   \beta^2  \frac{(\Omega_z+\mu)^2}{v^2} 
     -  \beta^2 \tanh^2(\beta v)  \frac{(\Omega_z+\mu)^2}{v^2}      \nonumber \\
     &= \tanh^2(\beta v)  \frac{\Omega_x^2}{v^4}
      +   \beta^2  \frac{(\Omega_z+\mu)^2}{v^2} \left( 1
     -   \tanh^2(\beta v)  \right).
\end{align}
Using that $\tanh(x) \approx x$ for $x<<1$ and that $\tanh(x) \approx 1$ for $x>>1$ yields
\begin{subequations}
\begin{align}
\label{eq-app:approxFInfoHighT}
\info &\approx   \frac{\beta^2 \Omega_x^2}{v^2} + \frac{\beta^2 (\Omega_z+\mu)^2}{v^2}(1-\beta^2 v^2)  
\approx \beta^2 \, , \quad \textnormal{for} \,\, \beta \|H\|_s \ll 1, \quad \textnormal{and} \\
\label{eq-app:approxFInfoLowT}
\info &\approx \frac{\Omega_x^2}{v^4}  \, ,  \quad \textnormal{for} \,\, \beta \|H\|_s \gg 1.
\end{align}
\end{subequations}

Let us compare the high-temperature upper bound~\eqref{eq-app:upperBoundsDiagonalWYhighT} with the approximate value~\eqref{eq-app:approxFInfoHighT}, as well as the low-temperature upper bound~\eqref{eq-app:upperBoundsDiagonalWYlowT} with the approximate value~\eqref{eq-app:approxFInfoLowT}. The main-text upper bound~\eqref{eq:BoundsDiagonalWY} is saturable, to within a constant multiplicative factor, in both temperature regimes.
Together with the Cram\'er-Rao bound, our bounds imply that
\begin{align}
 \textnormal{var}_{\textnormal{opt}} ( \hat\mu_l ) \approx 
 \begin{cases}& \frac{1}{\mathcal{N} \beta^2} \, ,  \quad \textnormal{for} \,\, \beta \|H\|_s \ll 1, \\
 &  \frac{\|H\|_s^4}{16 \mathcal{N} \Omega_x^2} \, , \quad \textnormal{for} \,\, \beta \|H\|_s \gg 1.
 \end{cases}
\end{align}

\section{Comparisons of bounds on the quantum Fisher information}
\label{sec-app:Comparisons}

In this section, we calculate quantum Fisher information in a spin-chain example. We compare the exact value with our bounds, Eqs.~\eqref{eq:BoundsDiagonalDeltaH} and~\eqref{eq:BoundsDiagonalWY} in the main text. We reproduce the bounds here for convenience:
\begin{subequations}
 \begin{align}
  \info_{ll}  &\leq \beta^2 \, (\Delta A_l)^2 , \\ 
  \info_{ll} &\geq 4  \beta^2 c_1  \, (\Delta A_l)^2 ,  
\end{align}   
\end{subequations}
and   
\begin{subequations}
\begin{align}
\info_{ll} &\leq 2.4 \, c_2 \, \beta^2 \, \left( (\Delta A_l)^2 \nobreak-\nobreak \tfrac{1}{2}\big\|  [\sqrt{\rho},A_l] \big\|_2^2 \right),
\quad \\
\info_{ll} &\geq 0.8 \, \beta^2 \left( (\Delta A_l)^2 \nobreak-\nobreak \tfrac{1}{2}\big\|  [\sqrt{\rho},A_l] \big\|_2^2 \right).
\end{align}
\end{subequations}
We also compare the bounds to the one derived in Ref.~\cite{miller_energytemperature}: 
\begin{align}
\info_{ll} \nobreak \leq \nobreak \beta^2 \int_0^1 \tr{\rho^a \delta A_l \rho^{1-a} \delta A_l} da,
\end{align}
with $\delta A_l \coloneqq A_l \nobreak-\nobreak \langle A_l \rangle$.

Consider estimating the parameter $\mu$ from the thermal state of a spin chain. We consider a one-dimensional chain composed of $n$ spin-$1/2$ systems, with the Hamiltonian
\begin{align}
    H = \nobreak  \mu \sum_{j=1}^n  \sigma_z^j  \nobreak + \nobreak  \lambda \sum_{j=1}^{n-1} \sigma_x^j \otimes \sigma_x^{j+1}  \eqqcolon \nobreak \mu A_\mu \nobreak+\nobreak \lambda A_\lambda.
\end{align}
$A_\mu$ and $A_\lambda$ are the operators that multiply the parameters $\mu$ and $\lambda$.

Figure~\ref{fig-app:fig3} compares the quantum Fisher information about $\mu$ with the upper and lower bounds in Eqs.~\eqref{eq:BoundsDiagonalDeltaH} and~\eqref{eq:BoundsDiagonalWY}, and with the upper bound in Ref.~\cite{miller_energytemperature}. We simulate $n=5$ spins. The figure shows that the bounds are distinct and that none of them is tighter than another in all regimes: in each subfigure, the two blue curves (upper bounds derived in this Letter) and the orange star plot (bound in Ref.~\cite{miller_energytemperature}) cross, as do the two red curves (lower bounds derived in this Letter). However, the bounds are always obeyed: the black curve (exactly calculated quantum Fisher information) always lies below the blue curves and orange star plot (upper bounds)  and above the red curves (lower bounds).

\begin{figure}[h]
 (a) Quantum Fisher information vs. $\beta$ $(\lambda/\mu = 5)$  \quad\quad \qquad\qquad(b) Quantum Fisher information vs. $\lambda$ $(\beta\mu = 0.1)$ \\
  \centering  
   \includegraphics[trim=00 00 00 00,width=0.48 \textwidth]{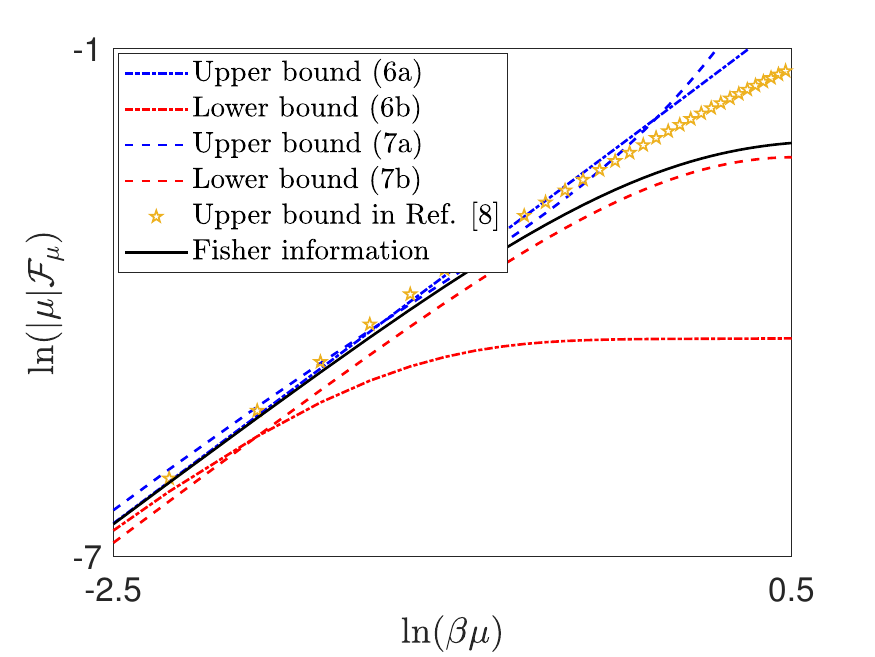} 
   \includegraphics[trim=00 00 00 00,width=0.48 \textwidth]{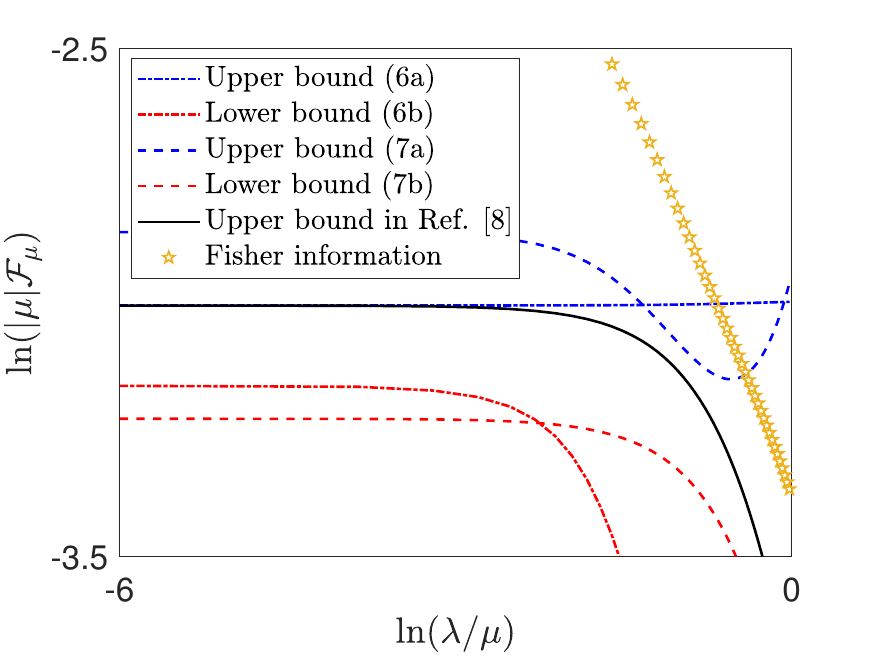} 
\caption{ 
\textbf{Comparisons of bounds on the quantum Fisher information.} 
The figure shows log-log plots of the quantum Fisher information $\info_\mu$ about parameter $\mu$, as a function of the inverse temperature (left) and as a function of the parameter $\Omega$ (right). The 5-qubit system has the Hamiltonian  $H = \nobreak  \mu \sum_{j=1}^n \sigma_z^j  \nobreak + \nobreak  \lambda  \sum_{j=1}^{n-1} \sigma_x^j \otimes \sigma_x^{j+1}  \coloneqq\nobreak \mu A_\mu \nobreak+\nobreak \lambda A_\lambda$. The plots also depict the upper and lower bounds in Eqs.~\eqref{eq:BoundsDiagonalDeltaH} and~\eqref{eq:BoundsDiagonalWY}, and the upper bound derived in Ref.~\cite{miller_energytemperature}. Each plot illustrates (where a red line crosses a red line or a blue line or orange stars cross) how different bounds can be tighter in different regimes.
\label{fig-app:fig3}
}
\end{figure}

\section{A model that can beat the standard quantum limit}
\label{sec-app:SQL} 

Here, we prove that the Hamiltonian
\begin{equation}
H = \mu \sum_{j=1}^n \left( \sigma_z^j+ 1 \right) \nobreak -\nobreak \lambda    \bigotimes_{j=1}^n n \sigma_x^j \equiv\nobreak H_\mu \nobreak+\nobreak H_\lambda,
\end{equation}
considered in the main text has the GHZ state as its unique ground state when $\mu\ll\lambda$. We also prove that $\left( \Delta H_\mu \right)^2 \nobreak\approx \nobreak\mu^2 n^2 $ for $\beta \lambda n \gg 1$.

For convenience, we shift $H_\mu$ by an irrelevant factor of $(\mu n) \id$ so that we consider the new Hamiltonian
\begin{equation}
\widetilde{H} = \mu \sum_{j=1}^n \left( \sigma_z^j+ 1 + n\right) \nobreak -\nobreak \lambda    \bigotimes_{j=1}^n n \sigma_x^j \equiv\nobreak \widetilde{H}_\mu \nobreak+\nobreak H_\lambda.
\end{equation}
In the computational basis, labeled by bit strings $s\in\{0,1\}^n$, this Hamiltonian is block diagonal, with $2^{n-1}$ blocks of dimension two spanned by pairs of computational basis states $\{\ket{s}, \ket{\overline{s}}\}$. Here, $\overline{s}$ denotes the complement of $s$---i.e. $\overline{s}_j=s_j+1$ (mod 2). These blocks, each labeled by a bitstring $s$, take the form 
\begin{align}  
\label{eq-app:SVIaux2}\widetilde{H}_s = \begin{bmatrix}
    \mu z & -\lambda n  \\
   -\lambda n & -\mu z 
\end{bmatrix},
\end{align}
where $z\equiv 2|\overline{s}|-n$ and $|\overline{s}|$ denotes the Hamming weight (i.e.~the number of ones) of the bitstring $\overline{s}$. Note that we have two distinct, but equivalent, choices of the bitstring $s$ that labels each block. Also, $z\in[-n,n]$. 

Each such block can be simply diagonalized, and, thus, so can $\widetilde{H}=\bigoplus_{s} \widetilde{H}_s$.  The  eigenvalues of Eq.~\eqref{eq-app:SVIaux2} are $\pm\sqrt{\mu^2z^2 + \lambda^2 n^2}$. Consequently, the minimum eigenvalue of $\widetilde{H}$ occurs for the block where $|\overline{s}| = 0$ (under a different, but equivalent, choice of labeling this block, $|\overline{s}| = n$). The associated minimum eigenvalue is $-n\sqrt{\mu^2 + \lambda^2}$ and the corresponding eigenstate (the ground state of $\widetilde{H}$) is 
\begin{equation}
\ket{\mathrm{gs}}\propto -\left(\frac{\mu -\sqrt{\mu^2 + \lambda^2}}{\lambda}\right)\ket{s}+\ket{\overline{s}}.
\end{equation}

Consequently, for $\mu/\lambda\ll 1$, it holds that $\ket{\mathrm{gs}}\propto \ket{s}+\ket{\overline{s}}$, which is precisely the GHZ state $\ket{\Phi}$. 

For the GHZ state, it holds that $\Delta \widetilde{H}_\mu=\mu n$. Consequently, we might expect, at least at low temperatures, that the thermal states of this model might also exhibit estimation errors that decrease faster than the standard quantum limit. This expectation can be analytically validated. In particular, a general thermal state takes the form
\begin{align}
\label{eq-app:stateAux1}
\rho = Z_\beta^{-1} e^{-\beta \widetilde{H}}&=Z_\beta^{-1} \bigoplus_s e^{-\beta \widetilde{H}_s},
\end{align}
where $Z_\beta$ is the partition function. It is easy to evaluate
\begin{align}
\label{eq-app:stateAux2}
e^{-\beta \widetilde{H}_s}=\cosh(\beta \sqrt{\mu^2z^2+\lambda^2 n^2})\id - \sinh(\beta \sqrt{\mu^2z^2+\lambda^2 n^2})\frac{(\mu z) \sigma_z-(\lambda n) \sigma_x}{\sqrt{\mu^2z^2+\lambda^2 n^2}}.
\end{align}
Consequently,
\begin{align}\label{eq-app:part-func}
Z_\beta=2\sum_s \cosh(\beta \sqrt{\mu^2z^2+\lambda^2 n^2}).
\end{align}
We can evaluate the variance $(\Delta \widetilde H_\mu)^2$ as $\sum_s ( \Delta \widetilde H_{\mu,s} )^2$ where $H_{\mu,s}=\mu z\sigma_z$ is the block of $H_\mu$ labeled by the bit string $s$. In particular,
\begin{align}
\label{eq-app:VarianceAux1}
\left(\Delta \widetilde H_{\mu,s}\right)^2 &= \mathrm{Tr}\left[\rho_s\widetilde H_{\mu,s}^2\right]-\mathrm{Tr}\left[\rho_s\widetilde H_{\mu,s}\right]^2 \nonumber \\
&=Z_\beta^{-1}\left[2\mu^2z^2 \cosh(\beta \sqrt{\mu^2z^2+\lambda^2 n^2})-\frac{4 \mu^4z^4 \sinh^2(\beta \sqrt{\mu^2z^2+\lambda^2 n^2})}{Z_\beta(\mu^2z^2+\lambda^2 n^2)}\right],
\end{align}
where we used that $\rho_s=Z_\beta^{-1}e^{-\beta \widetilde{H}_s}$ (note, $\rho=\bigoplus_s\rho_s$). 

Asymptotically in $\beta \lambda n$, we only have to consider the $z=n$ block in Eqs.~(\ref{eq-app:part-func})-(\ref{eq-app:VarianceAux1}), as $\lim_{x\rightarrow\infty} \frac{\cosh((1+\epsilon) x)}{\cosh(x)}=\infty$ (also, $\lim_{x\rightarrow\infty} \frac{\sinh((1+\epsilon) x)}{\sinh(x)}=\infty$) for any $\epsilon>0$. Consequently, asymptotically in $\beta \lambda n$,
\begin{align}
\left(\Delta \widetilde H_{\mu}\right)^2 &\sim \frac{1}{2\cosh(\beta n\sqrt{\mu^2+\lambda^2})}\left[2\mu^2n^2\cosh(\beta n\sqrt{\mu^2+\lambda^2})-\frac{2 \mu^4n^4 \sinh^2(\beta n\sqrt{\mu^2+\lambda^2})}{n^2\cosh(\beta n\sqrt{\mu^2+\lambda^2})(\mu^2+\lambda^2)}\right]\nonumber \\
&=\mu^2n^2- \tanh^2(\beta n\sqrt{\mu^2+\lambda^2})\frac{\mu^4n^2 }{\mu^2+\lambda^2} \nonumber \\
&\sim \mu^2n^2\left(1-\frac{\mu^2}{\mu^2+\lambda^2}\right),
\end{align}
where, in the last line, we use that $\beta > 0$. Therefore, we find the scaling $\left(\Delta \widetilde H_{\mu}\right)^2 \sim \mu^2 n^2$ for $\beta \lambda n \gg 1$---up to subleading, constant factor contributions to the scaling that depend on $\lambda$.
Since shifts by constants do not change the variance of an operator, this also implies that $\left(\Delta H_{\mu}\right)^2 \sim \mu^2 n^2$

\section{Saturability of the multiparameter Cram\'er-Rao bound}
\label{sec-app:saturability}

In this section, we derive conditions under which the multiparameter Cram\'er-Rao bound is saturated. That is, we prove Eq.~\eqref{eq:saturability} in the main text.
The multiparameter Cram\'er-Rao bound 
is saturable if and only if~\cite{liu2019quantum}
\begin{align}
\label{eq-app:saturability}
    \tr{\rho [L_l,L_m]}
= 0.
\end{align}
To calculate this equation's left-hand side, we express the trace relative to the $\rho$ eigenbasis.
Relative to that eigenbasis, the symmetric logarithmic derivative is represented by a matrix with elements~\cite{liu2019quantum}  
\begin{align}
\label{eq-app:SLDaux}
   \bra{j} L_l \ket{k} = 2 \frac{\bra{j} \partial_l \rho \ket{k}}{p_j + p_k} \, .
\end{align}
After substituting into the trace, we invoke Eqs.~\eqref{eq-app:statederivOffDiag} and~\eqref{eq-app:statederivDiag}:
\begin{align}
    \tr{\rho [L_l,L_m]} &= \sum_{jk} \Big( p_j \bra{j} L_l \ket{k} \bra{k} L_m \ket{j} - p_k \bra{k} L_m \ket{j} \bra{j} L_l \ket{k} \Big) = \sum_{jk} (p_j - p_k) \bra{j} L_l \ket{k} \bra{k} L_m \ket{j} \nonumber \\
    &= 4 \sum_{jk} \frac{(p_j - p_k)}{(p_j + p_k)^2}  \bra{j} \partial_l \rho \ket{k} \bra{k} \partial_m \rho \ket{j}  \nonumber \\
    &= 4 \sum_{\omega_j \neq \omega_k} \frac{(p_j - p_k)}{(p_j + p_k)^2} \bra{j} \delta A_l \ket{k} \bra{k} \delta A_m \ket{j} \frac{\left( p_j  - p_k \right)^2}{(\omega_j - \omega_k)^2} \nonumber \\
    &+ 4 \beta^2 \sum_{\omega_j = \omega_k} \frac{(p_j - p_k)}{(p_j + p_k)^2} \bra{j} \delta A_l \ket{k} \bra{k} \delta A_m \ket{j} p_j^2 \nonumber \\
    &=4 \sum_{\omega_j \neq \omega_k} \frac{(p_j - p_k)^3}{(\omega_j - \omega_k)^2 (p_j+p_k)^2} \bra{j} A_l \ket{k} \bra{k} A_m \ket{j}.
\end{align}
This expression and Eq.~\eqref{eq-app:saturability} imply Eq.~(12) 
in the main text. 

In typical Hamiltonians, most parameters will not satisfy the rather stringent conditions~\eqref{eq:saturability} for saturation. 
They are satisfied, for example, when 
the operators $A_l$ are diagonal relative to the energy eigenbasis.
Hence the multiparameter Cram\'er-Rao bound is saturable when one is estimating the Hamiltonian eigenvalues $\omega_j$.

The single parameter Cram\'er-Rao bound can be saturated with measurements in the eigenbasis of the symmetric logarithmic derivative $L_l$ in Eq.~\eqref{eq-app:SLDaux2}~\cite{paris2009quantum}.
Using Eqs.~\eqref{eq-app:statederivOffDiag} and~\eqref{eq-app:statederivDiag} into Eq.~\eqref{eq-app:SLDaux}, we find that 
  \begin{align}
  \label{eq-app:SLDaux2}
  L_l &= \sum_{\omega_j \neq \omega_k} 2 \frac{\bra{j} \partial_l \rho \ket{k}}{p_j + p_k} \ket{j}\!\bra{k}      +  \sum_{\omega_j = \omega_k} 2 \frac{\bra{j} \partial_l \rho \ket{k}}{2p_j } \ket{k}\!\bra{j} \nonumber \\
  &= 2 \sum_{\omega_j \neq \omega_k} \frac{\left( p_j  - p_k \right)}{(p_j + p_k)(\omega_j-\omega_k)} \bra{j} \delta A_l \ket{k} \ket{j}\!\bra{k}      -  \beta \sum_{\omega_j = \omega_k}  \bra{j} \delta A_l \ket{k} \ket{k}\!\bra{j}.
  \end{align}
Performing measurements on the eigenbasis of $L_l$ would yield one protocol to saturate the Cram\'er-Rao bound.

  \section{Comparisons with the Hamiltonian-learning literature}
\label{sec-app:HamiltonianLearning}

In this section, we compare our bounds to earlier results concerning the Hamiltonian-learning problem.
Two approaches to Hamiltonian learning are common: (i) the steady-state-based approach and (ii) the time-evolution-based approach. In the steady-state-based approach, one studies states $\rho$ that are stationary with respect to evolution under the Hamiltonian $H$. These steady states satisfy the condition~\cite{bairey2019learning}
$$\partial_{t}\rho=-i\left[H,\rho\right]=0 .$$ 
Every Hamiltonian eigenstate is a steady state, as is the Gibbs state, $\frac{\exp\left(-\beta H\right)}{\mathrm{Tr}\left(\exp\left(-\beta H\right)\right)}$. Several studies concern estimations of the Hamiltonian from eigenstates~\cite{garrison2018does,qi2019determining,greiter2018method,chertkov2018computational,zhu2019reconstructing,turkeshi2019entanglement,bairey2019learning,zhou2021can} or from Gibbs states~\cite{bairey2019learning,anshu2021sample,haah2022optimal,gu2022practical,sbahi2022provably}. 

In the time-evolution-based approach, one analyzes the system's time evolution under the Hamiltonian.  Several proposals concern learning the Hamiltonian from unitary dynamics~\cite{hangleiter2021precise,yu2023robust,franca2022efficient,gu2022practical,huang2023learning}. Experimental implementations~\cite{wang2017experimental,senko2014coherent} of Hamiltonian-learning protocols have been carried out, too. In the Hamiltonian-learning problem, one aims to learn the Hamiltonian $H$ from a physically relevant class of Hamiltonians, while minimizing the algorithm's run time and the number of copies of $\rho$. These two metrics are commonly known as sample complexity and time complexity, respectively.

In this work, we focus on learning about a Hamiltonian from Gibbs states. Our comparison of sample-complexity lower bounds with earlier works is presented in the context of the \emph{l}$_2$ distance error, defined via $ \epsilon = \left( \sum_{l=1}^M\left| \mu_l - \hat{\mu}_l \right|^2\right)^{\frac{1}{2}} $. Here, $\hat{\mu}_l$ denotes the estimate for $\mu_l$. The rationale for this comparison criterion is due to our adoption of the related metric $\epsilon_{\textnormal{err}}$, defined via $\sum_{l=1}^M \textnormal{var}(\hat\mu_l) \, = \,  \epsilon_{\textnormal{err}}^2$. We provide the following Lemma to relate the two error metrics. 

\begin{lemma}
For $\epsilon$ and $\epsilon_{\text{err}}$ defined as before, the
following holds.
\begin{enumerate}
\item $\text{Prob}\left(\epsilon^{2}\geq a\right)\leq\frac{\epsilon_{\text{err}}^{2}}{a}$
for all $a>0$.
\item $\text{Prob}\left(\left|\epsilon^{2}-\epsilon_{\text{err}}^{2}\right|\geq a\right)\leq\frac{\text{Var}\left(\epsilon^{2}\right)}{a}$
for all $a>0.$ 
\item For any two real numbers $b$ and $a$ such that $b\geq a$, let $\text{Prob}\left(a\leq\epsilon^{2}\leq b\right)=1$.
Then $a\leq\epsilon_{\text{err}}^{2}\leq b$.
\end{enumerate}
\end{lemma}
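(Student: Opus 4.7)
The plan is to identify $\epsilon_{\text{err}}^2$ as the expectation of the nonnegative random variable $\epsilon^2$, and then to obtain the three statements as standard moment-type inequalities. Under the (implicit Cram\'er-Rao) assumption that each estimator $\hat\mu_l$ is unbiased, one has $\text{var}(\hat\mu_l)=\mathbb{E}\!\left[(\hat\mu_l-\mu_l)^2\right]$, and summing over $l$ gives
\[
\epsilon_{\text{err}}^2 \;=\; \sum_{l=1}^M \text{var}(\hat\mu_l) \;=\; \mathbb{E}\!\left[\sum_{l=1}^M |\hat\mu_l-\mu_l|^2\right] \;=\; \mathbb{E}[\epsilon^2].
\]
This identification is the only conceptual ingredient; each of the three items is then a one-line consequence.

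For item (1), since $\epsilon^2\geq 0$ almost surely, Markov's inequality applied to $\epsilon^2$ yields $\text{Prob}(\epsilon^2\geq a)\leq \mathbb{E}[\epsilon^2]/a = \epsilon_{\text{err}}^2/a$ for every $a>0$. This is the cleanest route and avoids any need to bound higher moments.

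For item (2), I would invoke Chebyshev's inequality, i.e.\ Markov applied to the nonnegative random variable $(\epsilon^2-\epsilon_{\text{err}}^2)^2 = (\epsilon^2-\mathbb{E}[\epsilon^2])^2$, which gives
\[
\text{Prob}\!\left(|\epsilon^2-\epsilon_{\text{err}}^2|\geq a\right) \;=\; \text{Prob}\!\left((\epsilon^2-\epsilon_{\text{err}}^2)^2\geq a^2\right) \;\leq\; \frac{\mathbb{E}\!\left[(\epsilon^2-\epsilon_{\text{err}}^2)^2\right]}{a^2} \;=\; \frac{\text{Var}(\epsilon^2)}{a^2}.
\]
The statement as written has $a$ rather than $a^2$ in the denominator, which I suspect is a typographical slip, since the cleanest argument naturally produces $a^2$. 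I would flag this and present the Chebyshev form; an alternative that yields the stated shape is to apply Markov directly to $|\epsilon^2-\epsilon_{\text{err}}^2|$, but then one must control the first absolute moment, and $\mathbb{E}|X-\mathbb{E}X|\leq \sqrt{\text{Var}(X)}$ does not deliver $\text{Var}(X)/a$ cleanly.

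For item (3), if $\text{Prob}(a\leq \epsilon^2\leq b)=1$, then monotonicity of expectation gives $a=\mathbb{E}[a]\leq \mathbb{E}[\epsilon^2]\leq \mathbb{E}[b]=b$, hence $a\leq\epsilon_{\text{err}}^2\leq b$. The main obstacle is thus not a calculation but making the unbiasedness hypothesis that underlies $\epsilon_{\text{err}}^2=\mathbb{E}[\epsilon^2]$ explicit; everything else reduces to Markov's and Chebyshev's inequalities together with monotonicity of expectation.
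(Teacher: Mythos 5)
Your proof is correct and follows essentially the same route as the paper: identify $\epsilon_{\text{err}}^{2}=\mathbb{E}\left[\epsilon^{2}\right]$ via unbiasedness of the estimators, then apply Markov's inequality, Chebyshev's inequality, and monotonicity of expectation for the three items. You are also right to flag the denominator in item (2): the paper's own proof quotes Chebyshev's inequality with $a$ rather than $a^{2}$ in the denominator, so the printed statement is a slip (and, as stated, can fail for $a<1$); your version with $\text{Var}\left(\epsilon^{2}\right)/a^{2}$ is the correct one.
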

\begin{proof}
\textit{(Proof of part $1$)} Note that $\epsilon_{\text{err}}^{2}=\sum_{l=1}^{M}\text{var}\left(\hat{\mu_{l}}\right)$
and $\epsilon^{2}=\sum_{l=1}^{M}\left(\hat{\mu_{l}}-\mu_{l}\right)^{2}.$
Since $\hat{\mu_{l}}$ is an unbiased estimator for $\mu_{l}$, we
have $\mathbb{E}\left(\hat{\mu_{l}}\right)=\mu_{l}$ for $l\in\left\{ 1,2,\cdots,M\right\} .$

Thus, 
\begin{equation}
\text{var}\left(\hat{\mu_{l}}\right)=\mathbb{E}\left[\left(\hat{\mu_{l}}-\mathbb{E}\left(\hat{\mu_{l}}\right)\right)^{2}\right]=\mathbb{E}\left[\left(\hat{\mu_{l}}-\mu_{l}\right)^{2}\right].\label{eq:variance}
\end{equation}

Let us define a new random variable, $V_{l}=\left(\hat{\mu_{l}}-\mu_{l}\right)^{2}$.
Thus, using Eq.~(\ref{eq:variance}), we get 
\begin{equation}
\epsilon_{\text{err}}^{2}=\sum_{l=1}^{M}\mathbb{E}\left[V_{l}\right]\label{eq:exp1}
\end{equation}
and
\begin{equation}
\epsilon^{2}=\sum_{l=1}^{M}V_{l}.\label{eq:exp2}
\end{equation}

Since $V_{l}$ is a non-negative random variable, using Markov's inequality
with Eqs.~(\ref{eq:exp1}) and (\ref{eq:exp2}), we get
\[
\text{Prob}\left(\epsilon^{2}\geq a\right)\leq\frac{\epsilon_{\text{err}}^{2}}{a}
\]
for $a>0$. This completes the proof of part $1$.

\textit{(Proof of part $2$)} If $Y$ is a random variable with $\mathbb{E}\left(Y\right)=\alpha$
and $\text{Var}\left(Y\right)=\beta$, Chebyshev's inequality
says
\[
\text{Prob}\left(\left|Y-\alpha\right|\geq a\right)\leq\frac{\beta}{a} \quad \forall a>0.
\]

Applying Chebyshev's inequality to $Y=\epsilon^{2}=\sum_{l=1}^{M}V_{l}$,
we get
\begin{equation}
\text{Prob}\left(\left|\epsilon^{2}-\alpha\right|\geq a\right)\leq\frac{\text{Var}\left(\epsilon^{2}\right)}{a}\text{ \ensuremath{\quad}\ensuremath{\forall} }a>0.\label{eq:exp3}
\end{equation}

Since expectation is linear, we have 
\begin{equation}
\alpha=\mathbb{E}\left(\epsilon^{2}\right)=\epsilon_{\text{err}}^{2}.\label{eq:exp4}
\end{equation}

Using Eqs.~(\ref{eq:exp3}) and (\ref{eq:exp4}), we get the desired
result.

\textit{(Proof of part $3$)} For any random variable $Z$ and two real numbers
$a,b$ such that $b\geq a$, the following holds:
\begin{equation}
\text{Prob}\left(a\leq Z\leq b\right)=1\implies a\leq\mathbb{E}\left(Z\right)\leq b\label{eq:exp5}
\end{equation}

Substituting $Z=\epsilon^{2}$ in Eq.~(\ref{eq:exp5}) and
using Eq.~(\ref{eq:exp4}) for the expectation value of $Z$, we
get the desired result. 
\end{proof}

Distinctly from prior findings, our sample-complexity lower bound is defined by the commutativity of the Gibbs state with the terms in the Hamiltonian. 
Our approach relies on no assumptions about the Hamiltonian's structure. In contrast, earlier studies focused on low-interaction Hamiltonians: each term in the Hamiltonian is supported on a constant number of qubits. 
For a synopsis, refer to Table~\ref{table:complexity_Ham_learning}. 
\begin{table}[h]
\centering
\begin{tabular}{|c|c|c|}
\hline
Reference & Sample-complexity lower bound  & Key technique\\
\hline 
Bairey \textit{et al.}~\cite{bairey2019learning} & ? &  NA\\
\hline
Anshu \textit{et al.}~\cite{anshu2021sample} & $\Omega\left( \frac{\sqrt{M}+ \log\left(1-\delta\right)}{\beta \epsilon}\right)$ & Quantum state discrimination  \\
\hline
Sbahi \textit{et al.}~\cite{sbahi2022provably} & $?$   &  NA \\
\hline
Haah \textit{et al.}~\cite{haah2022optimal}& $\Omega\left( \frac{\exp\left(\beta\right)M}{\beta^2 \epsilon^2}\right)$  & Coding theory \\
\hline
Gu \textit{et al.}~\cite{gu2022practical} & $?$   &  NA\\
\hline
This work & $\Omega \left(\frac{ M }{ \beta^2 \, \epsilon_{\textnormal{err}}^2} \max \left\{  \min_l \frac{1}{\left(\Delta A_l\right)^2}   \, , \,  \min_l \frac{c_2^{-1}/2}{\left(\Delta A_l\right)^2 - \tfrac{1}{2} \left\| \left[ \sqrt{\rho}, A_l \right] \right\|_2^2} \right\} \right)$ &   Quantum Cram\'er-Rao bound\\
\hline
\end{tabular}
\caption{\textbf{Complexity of learning Hamiltonians via Gibbs states.} The error $\epsilon$ is the \emph{l}$_2$-distance error in the estimate of the Hamiltonian parameters. We use the related quantity $\epsilon_{\textnormal{err}}$, defined via $\sum_{l=1}^M \textnormal{var}(\hat\mu_l) \, = \,  \epsilon_{\textnormal{err}}^2$. Our sample-complexity lower bound, uniquely among the approaches, (i) is based on the commutativity of the Hamiltonian's terms with the Gibbs state and (ii) requires no assumptions about the Hamiltonian's structure. In contrast, previous studies were conducted for low-interaction Hamiltonians (each term in the Hamiltonian is supported on a constant number of qubits).  The question marks (?) indicate that no value has been reported or is available. Among the five prior studies, three provide no lower bounds on sample complexity. Therefore, the ``key technique'' is listed as NA (``not applicable'') for these studies.
}
\label{table:complexity_Ham_learning}
\end{table}

\end{document}